\newcommand{\LTadd}[1]{}
\newcommand{\LTinput}[1]{}
\newcommand{\YYCleverefInput}[1]{}
\newcommand{\statement}[1]{\paragraph{#1}\pdfbookmark[1]{#1}{#1}} 
\renewcommand{\Delta}{\varDelta}
\renewcommand{\Phi}{\varPhi}
\renewcommand{\Psi}{\varPsi}
\renewcommand{\Lambda}{\varLambda}
\renewcommand{\Gamma}{\varGamma}
\renewcommand{\Omega}{\varOmega}
\renewcommand{\Theta}{\varTheta}
\newcommand{\vlr}{c_\mathrm{LR}}
\let\epsi\varepsilon
\DeclarePairedDelimiter{\intervaloo}{\lparen}{\rparen}
\DeclarePairedDelimiter{\intervalco}{\lbrack}{\rparen}
\DeclarePairedDelimiter{\intervalcc}{\lbrack}{\rbrack}
\DeclarePairedDelimiter{\paren}{\lparen}{\rparen}
\DeclarePairedDelimiterXPP{\pospart}[1]{}{\lbrack}{\rbrack}{_+}{#1}
\DeclarePairedDelimiter{\abs}{\lvert}{\rvert}
\DeclarePairedDelimiter{\norm}{\lVert}{\rVert}
    \newcommand{\VERT}[1]{#1|\mkern-1.5mu#1|\mkern-1.5mu#1|}
    \NewDocumentCommand{\tnorm}{ s o m }{
        \IfBooleanTF{#1}{
        \MT_delim_tnorm_star_wrapper:nnn%
            {\VERT{\bgroup\left}}{#3}{\VERT{\aftergroup\egroup\right}}
        }{
            \IfValueTF{#2}{
                \@nameuse{MT_delim_tnorm_nostarscaled_wrapper:nnn}%
                    {\VERT{\@nameuse {\MH_cs_to_str:N #2 l}}}
                    {#3}
                    {\VERT{\@nameuse {\MH_cs_to_str:N #2 r}}}
            }{
                \MT_delim_tnorm_nostarnonscaled_wrapper:nnn%
                    {\VERT{}}
                    {#3}
                    {\VERT{}}
            }
        }
    }
\DeclarePairedDelimiter{\commutator}{\lbrack}{\rbrack}
\DeclarePairedDelimiterX{\innerproduct}[2]{\langle}{\rangle}{#1,#2}
\DeclarePairedDelimiterX{\innerp}[2]{\langle}{\rangle}{#1,#2}
\DeclarePairedDelimiter{\List}{\{}{\}}
\DeclarePairedDelimiter{\floor}{\lfloor}{\rfloor}
\DeclarePairedDelimiterXPP{\dd}[1]{d}{\lparen}{\rparen}{}{#1}
\DeclarePairedDelimiterXPP{\dist}[1]{d}{\lparen}{\rparen}{}{#1}
\DeclarePairedDelimiterXPP{\diam}[1]{\SYMdiam}{\lparen}{\rparen}{}{#1}
\DeclarePairedDelimiterXPP{\Exp}[1]{\exp}{\lparen}{\rparen}{}{#1}
\let\bBigg@@\bBigg@
\renewcommand{\bBigg@}[2]{{%
  \mathchoice
    {\bBigg@@{#1}{#2}}%
    {\bBigg@@{#1}{#2}}%
    {\big@size=.5\big@size\bBigg@@{#1}{#2}}%
    {\big@size=.3\big@size\bBigg@@{#1}{#2}}}}%
\DeclareDocumentCommand{\trace}{s o e{_} m}{
    \IfValueTF{#3}
        {\Tr_{#3}}
        {\Tr}%
    \IfBooleanTF{#1}
        {\paren*{#4}}
        {
            \IfValueTF{#2}
                {\paren[#2]{#4}}
                {\paren{#4}}%
        }%
}
\DeclareDocumentCommand{\Trace}{s o e{_} m}{
    \IfValueTF{#3}
        {\Tr_{#3}}
        {\Tr}%
    \IfBooleanTF{#1}
        {\paren*{#4}}
        {
            \IfValueTF{#2}
                {\paren[#2]{#4}}
                {\paren{#4}}%
        }%
}
\providecommand\given{}
\newcommand\SetSymbol[1][]{%
    \nonscript\,#1\vert
    \allowbreak
    \nonscript\,
    \mathopen{}}
\DeclarePairedDelimiterX\Set[1]\{\}{%
    \renewcommand\given{%
        \SetSymbol[\delimsize]}
    \nonscript\,
    #1
    \nonscript\,
}
\DeclarePairedDelimiterXPP{\pdd}[1]{\scale_make_bigger_l:N\delimsize\lparen d}{\lparen}{\rparen}{\scale_make_bigger_r:N\delimsize\rparen}{#1}
\DeclarePairedDelimiterXPP{\pdist}[1]{\scale_make_bigger_l:N\delimsize\lparen d}{\lparen}{\rparen}{\scale_make_bigger_r:N\delimsize\rparen}{#1}
\DeclarePairedDelimiterXPP{\pdiam}[1]{\scale_make_bigger_l:N\delimsize\lparen \SYMdiam}{\lparen}{\rparen}{\scale_make_bigger_r:N\delimsize\rparen}{#1}
    \newcommand{\texorpdfstring}[2]{#1}
    \newcommand{\href}[2]{#2}
    \newcommand{\hypersetup}[1]{}
    \newcommand{\orcidlink}[1]{ORCiD}
    \newcommand{\pdfbookmark}[1]{}
\crefname{equation}{}{}
\crefname{assumption}{Assumption}{Assumptions}
\crefname{conjecture}{Conjecture}{Conjectures}
\newcommand{\sumstack}[2][]{\ifstrempty{#1}{\sum_{\substack{#2}}}{\smashoperator[#1]{\sum_{\substack{#2}}}}}
\newcommand{\e}{{\mathrm{e}}}
 \newcommand{\R}{\mathbb{R}}
\newcommand{\C}{\mathbb{C}}
\newcommand{\N}{\mathbb{N}}
\newcommand{\Z}{\mathbb{Z}}
\newcommand{\alg}{\mathcal{A}}
\newcommand{\algloc}{\alg_{\mathrm{loc}}}
\newcommand\cexpsym{\mathbb{E}}
\DeclareDocumentCommand{\cexp}{s o m m}{%
    \cexpsym\c_math_subscript_token{#3}
    \IfBlankF{#4}
    {
        \exp_last_unbraced:Ne \paren {\IfBooleanT{#1}{*}\IfValueT{#2}{[\exp_not:N #2]}} {#4}
    }
}
\newcommand{\calI}{\mathcal{I}}
\DeclareMathOperator{\Tr}{Tr}
\newcommand{\SYMdiam}{\operatorname{diam}}
\newcommand{\quadtext}[1]{\quad\text{#1}\quad}
\newcommand{\Alignindent}{\hspace*{2em}&\hspace*{-2em}}
\let\oldsetminus\setminus
\newbox\mybox
\newcommand\cutsetminus[1]{%
    \setbox\mybox\hbox{\(#1\oldsetminus\)}%
    \ht\mybox=0pt%
    \dp\mybox=0pt%
    \usebox\mybox%
}
\renewcommand\setminus{%
    \mathbin{%
        \mathchoice%
            {\displaystyle\oldsetminus}
            {\textstyle\oldsetminus}
            {\cutsetminus{\scriptstyle}}
            {\cutsetminus{\scriptscriptstyle}}
    }%
}
\newcounter{theoremenv}
\newcounter{theoremenvglobal}
\newlist{thmlist}{enumerate}{1}
\setlist[thmlist]{
    label=\textup{(\alph{thmlisti})},
    ref={(\alph{thmlisti})},
    nosep,
}
\renewcommand{\p@thmlisti}{\perh@ps{\protect\ref{auto-label:\arabic{theoremenvglobal}}}}
\DeclareRobustCommand{\perh@ps}[1]{#1}
\newcommand{\itemref}[1]{%
    \begingroup 
    \let\perh@ps\@gobble\ref{#1}%
    \endgroup
}
\renewcommand \thetheoremenv {\use:c{the\thmt@envname}}
\theoremstyle{plain}
\declaretheorem[
    name=Theorem,
]{theorem}
\declaretheorem[
    name=Lemma,
    sibling=theorem,
]{lemma}
\declaretheorem[
    name=Proposition,
    sibling=theorem,
]{proposition}
\theoremstyle{definition}
\declaretheorem[
    name=Definition,
    sibling=theorem,
]{definition}
\declaretheorem[
    name=Assumption,
]{assumption}
\theoremstyle{remark}
\declaretheorem[
    name=Remark,
    sibling=theorem,
    qed=\(\diamond\),
]{remark}
\addcolon\linebreak[2]#1}}
\newcommand{\mL}{\mathcal{L}}
\newcommand{\mA}{\mathcal{A}}
\newcommand{\eps}{\varepsilon}
\newcommand{\E}{\mathbb{E}}
\renewcommand{\d}{\mathrm{d}}
\renewcommand{\i}{\mathrm{i}}
\newcommand{\w}{\omega}
\newcommand{\tr}{\mathrm{tr}}
\def\blindfootnote{\gdef\@thefnmark{}\@footnotetext}
\newcommand{\emaillink}[1]{\href{mailto:#1}{#1}}
\title{Dynamics generated by\\ spatially growing derivations on\\ quasi-local algebras}
\date{8. December 2025}
\author{
    Stefan Teufel%
    \texorpdfstring{%
        \,\orcidlink{0000-0003-3296-4261}
    }{}%
    \and
    Marius Wesle%
    \texorpdfstring{%
        \,\orcidlink{0009-0006-3365-4037}
    }{}%
    \and
    Tom Wessel%
    \texorpdfstring{%
        \,\orcidlink{0000-0001-7593-0913}
    }{}%
}
    \bigskip\foreignlanguage{ngerman}{
        Fachbereich Mathematik,
        Universität~Tübingen,
        \\Auf~der~Morgenstelle~10,
        72076~Tübingen%
    },
\newcommand{\relskip}{\mathrel{\phantom{=}}\mathord{}}
\begin{document}

\bgroup
\hypersetup{hidelinks}
\maketitle\thispagestyle{empty}
\blindfootnote{
    \parbox[t]{.8\textwidth}{
        Email:
        \parbox[t]{.5\textwidth}{
            \emaillink{stefan.teufel@uni-tuebingen.de}
            \\\emaillink{marius.wesle@uni-tuebingen.de}
            \\\emaillink{tom.wessel@uni-tuebingen.de}
        }
    }
}
\egroup

\begin{abstract}
    \noindent
    We prove global existence and uniqueness of dynamics on the quasi-local algebra \(\mA\) of a quantum lattice system for spatially growing derivations \(\mL_\Phi = \sum_x \commutator{\Phi_x , \cdot}\).
    Existing results assume that the local terms \(\Phi_x\in\mA\) of the generator are uniformly bounded in space with respect to appropriate weighted norms \(\norm{\Phi_x}_{G,x}\).
    Analogous to the global existence result for first order ODEs, we show that global existence and uniqueness persist if the size of the local terms \(\norm{\Phi_x}_{G,x}\) grows at most linearly in space.
    This considerably enlarges the class of derivations known to have well-defined dynamics.
    Moreover, we obtain Lieb-Robinson bounds with exponential light cones for such dynamics.

    For the proof, we assume Lieb-Robinson bounds with linear light cones for dynamics, whose generators have uniformly bounded local terms.
    Such bounds are known to hold, for example, if the local terms are of finite range or exponentially localized.
\end{abstract}

\section{Introduction}

In this work we consider interactions \(\sum_{x\in\Gamma} \Phi_x\) defined on the CAR algebra \(\mA\) of lattice fermions on some discrete metric space \((\Gamma, d)\) with \(D\)-dimensional volume growth (think of \(\Z^D\) as the standard example), or on the quasi-local algebra of a spin system on \(\Gamma\), for which the local terms \(\Phi_x\in\mA\) are not bounded uniformly, but instead satisfy a linear growth bound of the form
\begin{equation}
    \label{introbound}
    \norm{\Phi_x}_{G,x}
    \leq
    C_\Phi \, \paren[\big]{1 + \dist{x,x_0}}
    \quadtext{for all}
    x \in \Gamma
    .
\end{equation}
Here \(G\) is a fixed decay function, \(\norm{\cdot}_{G,x}\) a weighted norm centred at \(x\) that quantifies the decay around~\(x\), and \(x_0\) a fixed point in \(\Gamma\).
We prove that if \(G\) decays fast enough such that interactions \(\sum_{x\in\Gamma} \Psi_x\) with uniformly bounded \(\norm{\cdot}_{G,x}\)-norms, i.e.\ with
\begin{equation*}
    \tnorm{\Psi}_G \coloneq \sup_{ x} \, \norm{\Psi_x}_{G,x}
    <
    \infty,
\end{equation*}
satisfy a Lieb-Robinson bound with linear light cone and Lieb-Robinson velocity proportional to \(\tnorm{\Psi}_G\), then \(\Phi\) generates a unique one-parameter group of automorphisms of \(\mA\) with exponential light cones.
Previous results in this direction are known to us only for one-dimensional systems, where the existence of the dynamics for linearly growing generators with exponentially decaying terms can also be concluded from~\cite[Theorem~6.2.6]{BR1981} via bounds on the surface energy.

Before going into details, let us briefly sketch the underlying heuristic picture.
For uniformly bounded interactions with \(\tnorm{\Psi}_G < \infty\), Lieb-Robinson bounds control the speed at which the Heisenberg dynamics generated by such an interaction effectively spreads the support of observables uniformly in space.
If, for example, the local terms have uniform finite range or decay exponentially, then the support of any local observable can spread at most with speed \(v_\mathrm{LR} \sim \vlr \, \tnorm{\Psi}_G\), the so called Lieb-Robinson velocity.
Such Lieb-Robinson bounds first proved in finite volume can then be used to prove existence of the dynamics in infinite volume, see for example~\cite{NSY2019} and references therein.
The situation is vaguely analogous to global existence of solutions to first order ODEs on \(\R^D\).
If the velocity field \(v\colon\R^D\to \R^D\) is locally Lipschitz continuous and bounded, unique local solutions extend to unique global solutions, as integral curves can only travel finite distances in finite time.
However, for first order ODEs the local Lipschitz condition, together with a linear upper bound, is sufficient to guarantee global existence, boundedness of the velocity field \(v\) is not needed.
Even when the velocity field grows linearly in space, integral curves can not reach infinity in finite time, instead the distance to the starting point can grow at most exponentially in time.
Our results establish a similar behaviour for the dynamics generated by interactions that satisfy~\eqref{introbound} and may additionally be also time-dependent:
We prove global existence and uniqueness of dynamics and exponential light cones for such interactions, see Theorem~\ref{thm:all-times}.

While we consider our results interesting in their own right because they considerably extend the class of interactions known to generate global dynamics on \(\mA\), let us briefly mention the application that motivated our study of this question: Consider the Hamiltonian \(H^B = \sum_x\Phi^B_x\) of a fermion system subject to a constant magnetic field~\(B\).
While \(H^B\) is typically a bounded interaction, the derivative \(\partial_B H^B\) of \(H^B\) with respect to \(B\) is an interaction with linearly growing local terms.
And \(\partial_B H^B\), or more precisely its image \(\calI(\partial_B H^B)\) under the quasi-local inverse \(\calI\) of the Liouvillian \(\mL_{H^B}\), is expected to generate the spectral flow for gapped ground states of \(H^B\).
So our result is the basis for showing that the spectral flow exists as a cocycle of locally generated automorphisms of \(\mA\) for gapped phases of matter with varying magnetic fields.
We refer to~\cite{WMM+2025} for a short discussion of this problem and to~\cite{MO2020,BTW2025automorphic} for the spectral flow of gapped ground states in infinite volume.

Finally, our result provides a class of automorphisms that are Fréchet continuous but not of Lieb-Robinson type in the sense of~\cite{BNG2025}.
These automorphisms provide interesting examples for the second part of their Theorem~1.1.
However, we provide bounds on the commutator in~\eqref{eq:commutator-LRB}, suggesting that the notion of Lieb-Robinson type in~\cite{BNG2025} might be too restrictive.

Our paper is organized as follows.
Section~\ref{sec:setup} presents the general setup, and Section~\ref{sec:res} states the precise assumptions and results.
The proofs are given in Section~\ref{sec:proofs}.

\section{Mathematical setup}
\label{sec:setup}

In the following we will denote by \((\Gamma, d )\) a countable metric space that is \(D\)-regular, i.e.\ there is a constant \(C_\mathrm{vol}\), such that for all \(x\in \Gamma\) and \(r>0\) we have
\begin{equation*}
    \abs{B_r(x)} \leq C_\mathrm{vol} \, (1+r)^D
    ,
    \quadtext{where}
    B_r(x) \coloneq \Set{y \in \Gamma \given \dist{y,x}\leq r}
\end{equation*}
denotes the closed ball of radius~\(r\).
Standard examples for \(\Gamma\) are \(\Z^D\) or any other Delone set in~\(\R^D\) with the restriction of the Euclidean metric from \(\R^D\).

The antisymmetric (or fermionic) Fock space over \(\Gamma\) with local space \(\C^n\), \(n\in\N\), is
\begin{equation*}
    \mathcal{F}(\Gamma,\C^n)
    \coloneq
    \bigoplus_{N=0}^{\infty}\ell^2(\Gamma,\C^n)^{\wedge N}.
\end{equation*}
We use \(a^*_{x,i}\) and \(a^{}_{x,i}\) for \(x\in \Gamma\), \(i\in \List{1,\dots,n}\), to denote the fermionic creation and annihilation operators associated to the standard basis of \(\ell^2(\Gamma,\C^n)\) and recall that they satisfy the canonical anti-commutation relations (CAR).
The number operator at site \(x\in \Gamma\) is defined by
\begin{equation*}
    n_x \coloneq \sum_{i=1}^n a^*_{x,i} \, a^{}_{x,i}.
\end{equation*}
The algebra of all bounded operators on \(\mathcal{F}(\Gamma,\C^n)\) is denoted by \(\mathcal{B}(\mathcal{F}(\Gamma,\C^n))\).
For each \(M\subseteq \Gamma\) let \(\mA_M\) be the unital C\(^*\)-subalgebra of \(\mathcal{B}(\mathcal{F}(\Gamma,\C^n))\) generated by
\begin{equation*}
    \Set[\big]{a^*_{x,i} \given x\in M,~ i\in \List{1,\dots,n}}.
\end{equation*}
The C\(^*\)-algebra \(\mA \coloneq \mA_{\Gamma}\) is the CAR-algebra, which we also call the \emph{quasi-local} algebra.
We write \(P_0(\Gamma) \coloneq \Set{M\subseteq \Gamma \given \abs{M}<\infty }\) and call
\begin{equation*}
    \mA_{\mathrm{loc}} \coloneq \bigcup_{M\in P_0(\Gamma)} \mA_M \subseteq \mA
\end{equation*}
the \emph{local} algebra, which is dense in \(\mA\).
An operator is called quasi-local if it lies in \(\mA\) and local if it lies in \(\mA_{\mathrm{loc}}\).
There is a unique automorphism%
\footnote{%
    In the following the term \textit{automorphism} is used in the sense of a \(*\)-automorphism as defined for example in~\cite{BR1979}.
}
\(\Theta\) of \(\mA\), such that
\begin{equation*}
    \Theta(a^*_{x,i})
    =
    - \, a^*_{x,i},
    \quadtext{for all}
    x\in \Gamma
    \text{ and }
    i\in \List{ 1,\dots,n }
    .
\end{equation*}
One defines the set of even quasi-local operators
\begin{equation*}
    \mA^+ \coloneq \Set{A\in \mA \given \Theta (A) = A }
    .
\end{equation*}
It is equal to the norm closure of the set of all linear combinations of products of an even number of annihilation or creation operators.
We denote its part in \(M\subseteq \Gamma\) by \(\mA^+_M \coloneq \mA^+\cap \mA_M\).
For disjoint regions \(M_1\),~\(M_2 \subseteq \Gamma\), all operators \(A\in \mA_{M_1}^+\) and \(B\in \mA_{M_2}\) commute, \([A,B] = 0\).

Positive linear functionals of the quasi-local algebra \(\w \colon \mA \to \C\) of norm \(1\) are called states.
In order to define quantitative notions of localization for quasi-local operators, one makes use of the fact that one can localize operators to given regions by means of the fermionic conditional expectation.
To this end first note that \(\mA\) has a unique state \(\w^{\tr}\) that satisfies
\begin{equation*}
    \w^{\tr}(AB) = \w^{\tr}(BA)
\end{equation*}
for all \(A\),~\(B \in \mA\), called the tracial state (e.g.\ \cite[Definition 4.1, Remark 2]{AM2003}).

\begin{proposition}[{\cite[Theorem 4.7]{AM2003}, \cite[Proposition 2.1]{WMM+2025}}]
    \label{Ex+UniqueExpectation}
    For each \(M\subseteq \Gamma\) there exists a unique linear map
    \begin{equation*}
        \E_M\colon \mA \to \mA_M,
    \end{equation*}
    called the \emph{conditional expectation} with respect to \(\w^\tr\), such that
    \begin{equation}
        \label{eq:Conditional-expectation-defining-property}
        \forall A\in \mA
        \ \forall B\in \mA_M:
        \quad
        \w^{\tr}(AB)=\w^{\tr}(\E_M(A)B) .
    \end{equation}
    It is unital, positive and has the properties
    \begin{align*}
        \forall M\subseteq \Gamma
        \ \forall A,C\in \mA_M
        \ \forall B\in\mA
        :&\quad
        \E_M (A\,B\,C) = A\, \E_M(B) \, C
        \\
        \forall M_1,M_2 \subseteq \Gamma
        :&\quad
        \E_{M_1} \circ \E_{M_2} = \E_{M_1\cap M_2}
        \\
        \forall M \subseteq \Gamma
        :&\quad
        \E_M \mA^+ \subseteq \mA^+
        \\
        \forall M \subseteq \Gamma
        \ \forall A \subseteq \mA
        :&\quad
        \norm{\E_M(A)}\leq \norm{A}
        .
    \end{align*}
\end{proposition}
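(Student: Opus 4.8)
The plan is to take the tracial state \(\w^\tr\) as the only structural input and to obtain \(\E_M\) as the trace-preserving projection onto \(\mA_M\), first for finite \(M\) where everything is finite-dimensional, and then for arbitrary \(M\) by a consistency-and-density argument. Throughout I use that \(\w^\tr\) is faithful, which is standard for the unique tracial state on the CAR algebra. Uniqueness is then immediate: if \(\E_M\) and \(\E_M'\) both satisfy \eqref{eq:Conditional-expectation-defining-property}, then \(C \coloneq \E_M(A) - \E_M'(A) \in \mA_M\) satisfies \(\w^\tr(C B) = 0\) for all \(B \in \mA_M\); choosing \(B = C^*\) gives \(\w^\tr(C C^*) = 0\), and faithfulness forces \(C = 0\).

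For existence I would first treat finite \(M\). Pass to the GNS representation \((\HS, \pi_\tr, \Omega)\) of \(\w^\tr\) and identify \(A \in \mA\) with the vector \(\hat A \coloneq \pi_\tr(A)\Omega\), so that \(\langle \hat A, \hat B\rangle = \w^\tr(A^* B)\) and, by traciality, \(\w^\tr(XB) = \langle \widehat{B^*}, \hat X\rangle\). For finite \(M\) the subalgebra \(\mA_M\) is a finite-dimensional matrix algebra, so \(\{\hat B : B \in \mA_M\}\) is already closed; letting \(e_M\) be the orthogonal projection onto it, the vector \(e_M \hat A\) equals \(\widehat{\E_M(A)}\) for a unique \(\E_M(A) \in \mA_M\), and this defines a linear map \(\E_M \colon \mA \to \mA_M\). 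The defining property is exactly orthogonality: for \(B \in \mA_M\) one has \(\w^\tr((A - \E_M(A))B) = \langle \widehat{B^*}, \hat A - e_M \hat A\rangle = 0\). Concretely \(\E_M\) is the partial trace over the complement \(M\cc\); here the fermionic grading enters, and the construction relies on the even-commutation property that \(\mA_M^+\) commutes with all of \(\mA_{M\cc}\), so that tracing out \(M\cc\) is unambiguous. In this finite-dimensional picture \(\E_M\) is manifestly unital and completely positive, fixes \(\mA_M\) pointwise, and satisfies \(\w^\tr\circ \E_M = \w^\tr\).

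To reach arbitrary \(M\) I would exploit consistency of the finite-region expectations. Using the defining property together with the locality relation \(\mA_{M_1}\cap \mA_{M_2} = \mA_{M_1\cap M_2}\), one checks first for finite regions the tower identity \(\E_{M_1}\circ \E_{M_2} = \E_{M_1\cap M_2}\) and the containment \(\E_{M'}(\mA_N) \subseteq \mA_{M'\cap N}\). Consequently, for a local observable \(A \in \mA_N\) with \(N\) finite and any finite \(M' \subseteq M\) with \(M' \supseteq M\cap N\), the value \(\E_{M'}(A) = \E_{M\cap N}(A)\) is independent of \(M'\). This lets me define \(\E_M(A) \coloneq \E_{M\cap N}(A) \in \mA_{M\cap N} \subseteq \mA_M\) on the dense subalgebra \(\mA_{\mathrm{loc}}\); it is contractive there, so it extends uniquely to a contraction \(\E_M \colon \mA \to \mA_M\), the image lying in \(\mA_M\) because \(\mA_M\) is norm-closed and each value already does (in particular \(\E_M\) restricts to the identity on \(\mA_M\)). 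The defining property \eqref{eq:Conditional-expectation-defining-property} extends to all \(B \in \mA_M\) by writing \(\w^\tr(XB) = \w^\tr(X\,\E_N(B))\) via \(\w^\tr\circ\E_N = \w^\tr\) and the module property, and then using that \(\E_N(B) \in \mA_{N\cap M}\) while \(X \coloneq A - \E_{M\cap N}(A)\) satisfies \(\E_{N\cap M}(X) = 0\).

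Finally the listed properties follow. Since \(\E_M\) is a norm-one projection of the C\(^*\)-algebra \(\mA\) onto the C\(^*\)-subalgebra \(\mA_M\), Tomiyama's theorem yields at once that it is completely positive, contractive, and a bimodule map, giving \(\E_M(ABC) = A\,\E_M(B)\,C\) for \(A,C \in \mA_M\); alternatively this identity follows directly from \eqref{eq:Conditional-expectation-defining-property}, traciality, and uniqueness. The tower property \(\E_{M_1}\circ \E_{M_2} = \E_{M_1\cap M_2}\) passes from finite regions to the general case by density and contractivity. Invariance under the parity automorphism, \(g_\pi \circ \E_M = \E_M \circ g_\pi\), follows from \(\w^\tr\circ g_\pi = \w^\tr\) and \(g_\pi(\mA_M) = \mA_M\) by the same uniqueness argument, and this gives \(\E_M \mA^+ \subseteq \mA^+\). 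The main obstacle is existence for infinite \(M\): making the partial trace over the complement rigorous in the presence of the fermionic grading, and ensuring the output lands in the norm-closed subalgebra \(\mA_M\) rather than merely in its weak closure or its \(\w^\tr\)-Hilbert-space closure. The consistency argument above, resting on the finite-region tower property and the even-commutation structure highlighted in the setup, is what resolves this.
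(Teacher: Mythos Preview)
The paper does not prove this proposition at all: it is stated with citations to \cite[Theorem~4.7]{AM2003} and \cite[Proposition~2.1]{WMM+2025}, followed only by the remark that the Araki--Moriya argument (written for \(\Z^D\)) applies unchanged in the present \(D\)-regular setting. So there is no in-paper argument to compare against.

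Your outline is the standard route and is essentially what the cited references do: uniqueness from faithfulness of \(\w^\tr\); existence for finite \(M\) as the partial trace, realised as the orthogonal projection in the tracial GNS space; extension to arbitrary \(M\) via the finite-region tower property and density of \(\algloc\); and finally Tomiyama's theorem (or the uniqueness argument) for positivity, contractivity, and the bimodule identity. The one place where your sketch is thin is the containment \(\E_{M'}(\mA_N)\subseteq \mA_{M'\cap N}\) for finite \(M'\), \(N\). For spin systems this is an immediate consequence of the tensor factorisation, but for CAR it genuinely requires the even/odd decomposition: the odd part of \(A\) relative to \(M'\cc\) is annihilated by the partial trace, and only the even part, which commutes with \(\mA_{M'\cc}\), survives and lands in \(\mA_{M'\cap N}\). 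You correctly flag that the fermionic grading enters here, so this is not a gap so much as the step where the real content of the Araki--Moriya reference lives; the rest of your argument goes through once that containment is in hand.
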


\begin{remark}
    Note that~\cite[Theorem 4.7]{AM2003} discusses only the case of \(\Gamma = \Z^D\).
    The proof however applies in the same way to our setting.
\end{remark}

Note that \(\norm{(1-\E_{B_r(x)})A} \to 0\) as \(r\to\infty\) for all \(A\in \mA\) by density of \(\algloc\) in~\(\mA\).
We now introduce subspaces of \(\mA\) for which one can explicitly control the rate of convergence in this limit in terms of decay functions.

\begin{definition}\label{def:decay-function}
    We call a bounded function \(F\colon [0,\infty) \to (0,\infty)\) a \emph{decay function} and define
    \begin{equation*}
        \nu_F
        \coloneq
        \sup\Set[\big]{ \nu\geq0 \given \sup_{r\geq0} \, F(r)\, (1+r)^\nu < \infty }
        \in
        \intervalco{0,\infty} \cup \List{\infty}
        .
    \end{equation*}
\end{definition}

\begin{definition}
    Let \(F\) be a decay function.
    We say an observable \(A\in \mA\) is \emph{\(F\)-localized} if for all \(x\in \Gamma\) it holds that
    \begin{equation*}
        \norm{A}_{F,x}
        \coloneq
        \norm{A} + \sup_{r\geq0} \frac{\norm{(1-\E_{B_r(x)})A}}{F(r)}
        <
        \infty .
    \end{equation*}
    We denote the space of all \(F\)-localized observables with \(\mA_F\).
    For \(\nu \geq 0\) and \(F(r) \coloneq (1+r)^{-\nu}\) we abbreviate \(\norm{\cdot}_{\nu,x} \coloneq \norm{\cdot}_{F,x}\) and \(\mA_\nu \coloneq \mA_F\).
\end{definition}
We included \(F\equiv 1\) in the class of decay functions, because then the quasi-local algebra \(\mA\) itself appears in the scale of spaces \(\mA_\nu\) at \(\nu=0\).
More precisely, we have
\(\mA_0 = \mA\) and \(\norm{A}_{0,x} \leq 3 \, \norm{A}\).
Also note that for decay functions \(F\) with exponential or slower decay and all \(x_1\),~\(x_2 \in \Gamma\), the norms \(\norm{\cdot}_{F,x_1}\) and \(\norm{\cdot}_{F,x_2}\) are equivalent.
Nevertheless, it is useful to define the family of norms with varying centre in order to express localization of observables more quantitatively~\cite{TW2025note}.

\begin{definition}
    Let \(I \subseteq \R\) be an interval.
    A \emph{time-dependent zero-chain on \(I\)} is a map
    \begin{equation*}
        \Phi \colon I \times \Gamma \to \mA^{+},\ (t,x) \mapsto \Phi_x(t),
    \end{equation*}
    such that for all \((t,x) \in I\times \Gamma\) the operator \(\Phi_x(t)\) is self-adjoint, for each \(x\in \Gamma\), the map \(I \to \mA^{+},\ t \mapsto \Phi_x(t)\) is norm-continuous and for each \(t\in I\) and \(A \in \mA_{\mathrm{loc}}\) the sum
    \begin{equation*}
        \mL_{\Phi(t)} \, A
        \coloneq
        \sum_{x\in \Gamma} \, \commutator{\Phi_x(t) , A}
    \end{equation*}
    converges unconditionally.

    Let \(F\) be a decay function.
    A time-dependent zero-chain \(\Phi\) on an interval \(I\) is \emph{uniformly \(F\)-local} if
    \begin{equation*}
        \tnorm{\Phi}_{F}
        \coloneq
        \sup_{t\in I} \, \sup_{x\in \Gamma} \, \norm{\Phi_x(t)}_{F,x}
        <
        \infty .
    \end{equation*}
    We denote the space of all uniformly \(F\)-local time-dependent zero-chains on \(I\) with~\(\mathcal{Z}_{F,I}\).
\end{definition}

In the analysis of quantum lattice systems, it is more common to specify the generators by so-called interactions, which associate a strictly local operator to each finite set~\(M\subseteq \Gamma\).
While there is no canonical identification of the set of interactions with the set of zero-chains, there are several natural maps that preserve the associated derivation and decay-properties.
For example for each \(x\in \Gamma\) one can sum all terms of an interaction that are centred around \(x\) in a suitable sense to obtain a zero-chain and one can cut each quasi-local term of a zero-chain in a telescopic fashion via the conditional expectation to obtain an interaction.
See, for example, \cite{BTW2025automorphic} for more details on these procedures and references~\cite{KS2020hall,KS2022local} for the motivation behind the term \enquote{zero-chains}.
Also note that the sets of derivations on \(\mA_{\mathrm{loc}}\) obtained from interactions and from zero-chains are exactly the same, namely the antisymmetric \(*\)-derivations from \(\mA_{\mathrm{loc}}\) to \(\mA\) that commute with the parity automorphism \(g_\pi\).
This can be seen as follows: Each such derivation is given by an interaction as is shown in~\cite{AM2003}.
Each interaction has an associated zero-chain with the same derivation, e.g.\ \cite{BTW2025automorphic}.
And from the definition above it is easy to see that every interaction coming from a zero-chain again satisfies the properties mentioned above.
In this work, we use zero-chains, because they allow for a very clear characterization of linearly growing generators in \cref{ass:growth}.

Finally, let us define what it means for a zero-chain to generate a cocycle of automorphisms.

\begin{definition}
    Let \(I\subseteq \R\) be an interval.
    A \emph{cocycle of automorphisms on \(I\)} is a family \((\alpha_{s,t})_{s,t \in I}\) of automorphisms on \(\mA\), such that for all \(s\),~\(t\),~\(u \in I\)
    \begin{equation*}
        \alpha_{s,t} \, \alpha_{t,u}
        =
        \alpha_{s,u}.
    \end{equation*}
    Let \(\Phi\) be a time-dependent zero-chain on \(I\).
    We say the cocycle of automorphisms \((\alpha_{s,t})_{s,t\in I}\) is \emph{generated by \(\Phi\)} if for all \(s\),~\(t \in I\) and \(A\in \mA_{\mathrm{loc}}\) it holds that
    \begin{equation*}
        \partial_t\, \alpha_{s,t}\, A = \alpha_{s,t}\, \i\, \mL_{\Phi(t)}\, A .
    \end{equation*}
\end{definition}

\section{Results}
\label{sec:res}

From now on we fix a time-dependent zero-chain \(\Phi\) on an interval \(I \subseteq \R\) and decay functions \(F\),~\(G\) with \(\nu_F>2D+2\) and \(\nu_G > D+2\) (cf.~Definition~\ref{def:decay-function}).
We then assume that the terms of \(\Phi\) grow at most linearly.

\begin{assumption}
    \label{ass:growth}
    There is an \(x_0 \in \Gamma\) and a constant \(C_\Phi >0\), such that
    \begin{equation*}
        \sup_{t\in I} \, \norm{\Phi_x(t)}_{G,x}
        \leq
        C_\Phi \, \paren[\big]{1 + \dist{x,x_0}}
        \quadtext{for all}
        x \in \Gamma
        .
    \end{equation*}
\end{assumption}

Moreover, we assume a Lieb-Robinson bound with a linear light cone for all \emph{bounded} zero-chains that have the same decay as \(\Phi\).

\begin{assumption}
    \label{ass:LRB}
    There exist constants \(C_{\mathrm{LR}}\),~\(\vlr > 0\), such that for all time-dependent zero-chains \(\Psi \in \mathcal{Z}_{G,I}\) with associated cocycle of automorphisms \((\alpha_{s,t})_{ s,t \in I}\) and all \(A\in \alg_X\), \(B\in \alg_Y^+\), and \(s\),~\(t \in I\) it holds that
    \begin{equation*}
        \norm[\big]{\commutator[\big]{\alpha_{s,t} \, A ,\, B }}
        \leq
        C_{\mathrm{LR}} \, \norm{A} \, \norm{B} \, \abs{X}
        \, F\paren[\big]{
            \pospart[\big]{\dist{X,Y} - \vlr \, \tnorm{\Psi}_G\, \abs{t-s}}
        }
        ,
    \end{equation*}
    where \(\pospart{x}=x\) if \(x\geq 0\) and \(\pospart{x} = 0\) if \(x<0\).
\end{assumption}

This assumption is in particular satisfied for exponential localization, with the decay functions \(G(r)=\e^{-b r}\) and \(F(r)=\e^{-b' r}\) for some \(b>b'>0\).
To see this, one constructs an associated interaction, which is exponentially decaying in the sense of~\cite{NSY2017} with the function~\(r\mapsto F(r) \, (1+r)^{D+1+\epsi}\).
The result then follows by~\cite[Theorem~3.1]{NSY2017}.
Moreover, we expect it to be satisfied for polynomial localization as well.
Indeed, for spin systems with time-independent interactions and polynomial decay, linear light cones for large times have been shown in~\cite{KS2020}.
For Fermions, only Lieb-Robinson bounds with algebraic light cones, i.e.\ where the above bound holds with \(\dist{X,Y}\) replaced by \(\dist{X,Y}^\sigma\) for some \(\sigma\in \intervaloo{0,1}\), are known~\cite{TW2025}.
In this case, our results hold if one assumes \(
    \sup_{t\in I} \, \norm{\Phi_x(t)}_{G,x}
    \leq
    C_\Phi \, \paren[\big]{1 + \dist{x,x_0}}^\sigma
\) instead of \cref{ass:growth}.

We will apply Assumption~\ref{ass:LRB} to approximations of \(\Phi\) on finite subsets of~\(\Gamma\).
For this purpose, for each \(k\in \intervalco{0,\infty}\) we define the time-dependent zero-chain \(\Phi^k\) by
\begin{equation*}
    \Phi^{k}_x(t) \coloneq
    \begin{cases*}
        \E_{B_{k/2}(x)} \, \Phi_x(t) & for \(x\in B_{k/2}(x_0)\), and \\
        0                            & \text{otherwise.}
    \end{cases*}
\end{equation*}
It is defined such that \(\sum_{x\in \Gamma} \Phi^k_x\) is strictly localized in \(B_k(x_0)\) and \(\Phi^k \in \mathcal{Z}_{G,I}\) with \(\tnorm{\Phi^k}_G \leq C_\Phi\, (1+ \tfrac{k}{2})\).
We denote the cocycle generated by \(\Phi^k\) as \((\alpha^k_{s,t})_{s,t \in I}\).

We then obtain existence and uniqueness of the infinite volume dynamics for short times with an additional explicit Lieb-Robinson type estimate.

\begin{theorem}\label{thm:small-times}
    Let \(\tau \coloneq 1/\paren{4 \, \vlr\, C_\Phi}\).
    For all \(s\), \(t\in I\) with \(\abs{t-s} \leq \tau\) and \(A\in \alg\),
    \begin{equation*}
        \alpha_{s,t} \, A
        \coloneq
        \lim_{k\to\infty} \, \alpha^k_{s,t} \, A
    \end{equation*}
    exists in norm and the convergence is uniform in \(s\) and \(t\).
    Moreover, for subintervals \(I'\subset I\) with \(\abs{I'} \leq \tau\), \((\alpha_{s,t})_{s,t \in I'}\) is the unique cocycle of automorphisms generated by the restriction \(
        \Phi|_{I'\times \Gamma} \colon I' \times \Gamma \to \mA, \ (t,x)\mapsto \Phi_x(t)
    \) of the time-dependent zero-chain~\(\Phi\) to~\(I'\).

    Setting \(\mu \coloneq \min\paren[\big]{\nu_F-(2D+2), \nu_G-(D+2)}\), it holds that for each \(\nu \in \intervaloo{0,\mu}\), there exists \(\gamma_\nu > 0\), such that for all \(s\),~\(t \in I\) with \(\abs{t-s}\leq \tau\) and \(A\in \mA_\nu\) we have the bound
    \begin{equation}
        \label{eq:short-time-locality-of-automorphism-in-nu-norm}
        \norm{\alpha_{s,t} \, A }_{\nu,x_0}
        \leq
        \gamma_\nu \, \norm{A}_{\nu,x_0} .
    \end{equation}
    In particular, it holds that \(\alpha_{s,t} \, A \in \mA_\nu\).
    The constant \(\gamma_\nu\) does not depend on \(\Phi\).
\end{theorem}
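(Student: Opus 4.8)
The plan is to establish the three assertions in the order (i) norm-convergence of \(\alpha^k_{s,t}A\), (ii) the quantitative locality bound~\eqref{eq:short-time-locality-of-automorphism-in-nu-norm}, and (iii) the cocycle, generation and uniqueness statements, since (iii) will need (ii) to make sense of \(\mL_{\Phi(t)}\) applied to the merely quasi-local observable \(\alpha_{s,t}A\). The one observation that makes the finite-volume objects unproblematic is that balls in a \(D\)-regular space are finite, so each \(\Phi^k\) is finitely supported, \(\mL_{\Phi^k(t)}\) is a bounded derivation, and \(\alpha^k\) is an honest norm-continuous cocycle; the same holds for the spatially restricted zero-chains used below. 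The single workhorse is the Duhamel comparison of two such cocycles \(\beta,\tilde\beta\) generated by \(\Psi,\tilde\Psi\): using the backward equation \(\partial_r\beta_{r,t}A=-\i\mL_{\Psi(r)}\beta_{r,t}A\) (which follows from the cocycle law and the generation property) one obtains
\[
    \tilde\beta_{s,t}A-\beta_{s,t}A
    =
    \i\int_s^t \tilde\beta_{s,r}\,\paren*{\mL_{\tilde\Psi(r)}-\mL_{\Psi(r)}}\,\beta_{r,t}A\,\d r ,
\]
so that, since automorphisms are isometric, \(\norm{\tilde\beta_{s,t}A-\beta_{s,t}A}\le\int_s^t\sum_x\norm{[\tilde\Psi_x(r)-\Psi_x(r),\beta_{r,t}A]}\,\d r\). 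Each summand is handled by telescoping \(\tilde\Psi_x-\Psi_x\) into pieces localised in balls \(B_m(x)\) (with norms controlled through the \(G\)-localisation and evenness preserved by \(\E_M\)) and applying \cref{ass:LRB} to \(\beta\).

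For (i) I would apply this with \(\beta=\alpha^k\) and the pair \(\Phi^{k'},\Phi^k\), \(k'>k\). The difference \(\Phi^{k'}_x-\Phi^k_x\) is nonzero only for \(\dist{x,x_0}\le k'/2\), equalling \((\E_{B_{k'/2}(x)}-\E_{B_{k/2}(x)})\Phi_x\) when \(\dist{x,x_0}\le k/2\) and \(\Phi^{k'}_x\) otherwise; in either case the relevant operator is localised at distance \(\gtrsim k/2\) from \(x_0\). Since \(\Phi^k\in\mathcal{Z}_{G,I}\) with \(\tnorm{\Phi^k}_G\le C_\Phi(1+k/2)\), the light-cone radius over a time \(\abs{t-r}\le\tau=1/(4\vlr C_\Phi)\) is at most \(\vlr\tnorm{\Phi^k}_G\tau\le(1+k/2)/4\), strictly below \(k/2\) for large \(k\). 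Hence for local \(A\) the argument of \(F\) in \cref{ass:LRB} is bounded below by a positive multiple of \(k\), and summing the telescoped pieces over \(m\) (using \(\nu_G>D+2\)) and over \(x\) (using \(D\)-regularity, the growth factor \(1+\dist{x,x_0}\) from \cref{ass:growth}, and \(\nu_F>2D+2\)) gives \(\norm{\alpha^{k'}_{s,t}A-\alpha^k_{s,t}A}\to0\), uniformly in \(s,t\) with \(\abs{t-s}\le\tau\). Density of \(\algloc\) and isometry extend this to all \(A\in\mA\); the limit is a unital \(*\)-homomorphism, and since \(\alpha_{t,s}=\lim_k\alpha^k_{t,s}\) furnishes a two-sided inverse, it is an automorphism.

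Step (ii) is the crux and the main obstacle. The difficulty is that the only Lieb–Robinson velocity available for \(\alpha^k\) is \(\vlr\tnorm{\Phi^k}_G\sim\vlr C_\Phi k\), which diverges with \(k\), so \cref{ass:LRB} cannot directly control the localisation of \(\alpha^k_{s,t}A\) uniformly in \(k\). The resolution mirrors the ODE heuristic: the effective speed at distance \(\rho\) from \(x_0\) is only \(\vlr C_\Phi(1+\rho)\). Concretely, let \(\Phi^{k,\rho}\) collect the terms of \(\Phi^k\) centred in \(B_\rho(x_0)\), with cocycle \(\beta^{k,\rho}\); then \(\tnorm{\Phi^{k,\rho}}_G\le C_\Phi(1+\rho)\) \emph{uniformly in \(k\)}, so \(\beta^{k,\rho}\) has light-cone radius \(\le(1+\rho)/4\) over time \(\tau\) — and, crucially, the factor \(C_\Phi\) cancels against \(\tau\). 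The Duhamel comparison between \(\alpha^k\) and \(\beta^{k,\rho}\) involves only terms \(\Phi^k_x\) with \(\dist{x,x_0}>\rho\), whose commutators with the \((1+\rho)/4\)-localised observable \(\beta^{k,\rho}_{r,t}A\) are small by \cref{ass:LRB}, yielding \(\norm{\alpha^k_{s,t}A-\beta^{k,\rho}_{s,t}A}\lesssim\mathrm{poly}(\rho)\,F(c\rho)\,\norm{A}\), uniformly in \(k\). To bound \(\norm{(1-\E_{B_r(x_0)})\alpha^k_{s,t}A}\) I would choose \(\rho\sim r\), estimate the comparison error by this display, and control \(\norm{(1-\E_{B_r(x_0)})\beta^{k,r}_{s,t}A}\) directly by expressing \(1-\E_{B_r(x_0)}\) through commutators with operators localised in \(B_r(x_0)\cc\) and invoking \cref{ass:LRB} for \(\beta^{k,r}\) (light cone \((1+r)/4<r\)), again of the form \(\mathrm{poly}(r)F(cr)\). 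For general \(A\in\mA_\nu\) one first telescopes \(A=\sum_j A_j\) with \(A_j\) localised in \(B_j(x_0)\) and \(\norm{A_j}\le\norm{A}_{\nu,x_0}(1+j)^{-\nu}\), applies the strictly local estimate to each \(A_j\), and sums over \(j\). The delicate part is the uniform bookkeeping of the decay losses — the \(\abs{X}\)-factor and the site-sums cost \(F\) about \(2D+2\) powers while the telescopings of the generator cost \(G\) about \(D+2\) powers — which is exactly what forces the range \(\nu<\mu=\min(\nu_F-(2D+2),\nu_G-(D+2))\). Passing to the limit \(k\to\infty\) via (i) gives~\eqref{eq:short-time-locality-of-automorphism-in-nu-norm}; since every \(C_\Phi\) entered only through the product \(C_\Phi\tau\), the constant \(\gamma_\nu\) is independent of \(\Phi\).

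Finally, for (iii) the cocycle identity passes to the limit from \(\alpha^k_{s,t}\alpha^k_{t,u}=\alpha^k_{s,u}\) by uniform convergence and isometry, valid on \(I'\) because \(\abs{I'}\le\tau\) bounds all three time-differences. For the generation property I start from \(\alpha^k_{s,t}A=A+\i\int_s^t\alpha^k_{s,r}\mL_{\Phi^k(r)}A\,\d r\) for \(A\in\algloc\); by (i) and the dominated convergence \(\mL_{\Phi^k(r)}A\to\mL_{\Phi(r)}A\) (the tail \(\norm{[\Phi_x(r),A]}\lesssim C_\Phi(1+\dist{x,x_0})G(\dist{x,X})\norm{A}\) is summable since \(\nu_G>D+1\)) the limit reads \(\alpha_{s,t}A=A+\i\int_s^t\alpha_{s,r}\mL_{\Phi(r)}A\,\d r\), and norm-continuity of the integrand gives \(\partial_t\alpha_{s,t}A=\alpha_{s,t}\i\mL_{\Phi(t)}A\). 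Here (ii) is essential: it guarantees \(\alpha_{r,t}A\in\mA_\nu\) with \(\nu\) large enough that \(\mL_{\Phi(r)}\), despite the linear growth of \(\Phi\), converges on \(\alpha_{r,t}A\), so the backward equation \(\partial_r\alpha_{r,t}A=-\i\mL_{\Phi(r)}\alpha_{r,t}A\) is meaningful. Uniqueness then follows by the standard Duhamel argument: for any cocycle \(\tilde\alpha\) on \(I'\) generated by \(\Phi|_{I'}\) one computes \(\partial_r(\tilde\alpha_{s,r}\alpha_{r,t}A)=0\) for \(A\in\algloc\), whence \(\tilde\alpha_{s,t}A=\alpha_{s,t}A\), and density concludes.
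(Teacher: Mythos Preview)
Your steps (i) and the generation part of (iii) match the paper. The growth estimate (ii) is handled differently: instead of introducing an auxiliary family \(\beta^{k,\rho}\), the paper upgrades the Cauchy estimate of step~(i) to a quantitative one, \(\norm{\alpha^l_{s,t}A-\alpha^k_{s,t}A}\le \tilde\gamma_\nu\,(1+k)^{-\nu}\norm{A}_{\nu,x_0}\), valid for all \(l\ge k\) and \(A\in\mA_\nu\). Since \(\sum_x\Phi^k_x\) is supported in \(B_k(x_0)\), so is \(\alpha^k_{s,t}\E_{B_k(x_0)}A\), whence \((1-\E_{B_k(x_0)})\alpha_{s,t}A=(1-\E_{B_k(x_0)})(\alpha_{s,t}-\alpha^k_{s,t}\E_{B_k(x_0)})A\); the right-hand side is bounded by the Cauchy estimate plus \(2\norm{(1-\E_{B_k(x_0)})A}\). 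Thus the single Cauchy computation simultaneously gives convergence and the locality bound, and your separate comparison \(\alpha^k\leftrightarrow\beta^{k,\rho}\) is unnecessary (though not wrong in spirit).

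The uniqueness argument in (iii), however, has a genuine gap. You want to differentiate \(r\mapsto\tilde\alpha_{s,r}\alpha_{r,t}A\), which requires the forward equation \(\partial_r\tilde\alpha_{s,r}B=\tilde\alpha_{s,r}\i\mL_{\Phi(r)}B\) with \(B=\alpha_{r,t}A\). By definition a cocycle generated by \(\Phi\) satisfies this only for \(B\in\algloc\), and \(\alpha_{r,t}A\) is merely in \(\mA_\nu\). You cannot extend by density: the derivation \(\mL_{\Phi(r)}\) is unbounded (the local terms grow), and even making sense of \(\mL_{\Phi(r)}\) on \(\mA_\nu\) needs roughly \(\nu>D+1\) so that \(\sum_x(1+\dist{x,x_0})^{1-\nu}\) converges, i.e.\ \(\mu>D+1\), which is not implied by the hypotheses \(\nu_F>2D+2\), \(\nu_G>D+2\). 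And even when \(\mu\) is large, nothing is assumed about the arbitrary cocycle \(\tilde\alpha\) beyond the generation property on \(\algloc\), so there is no mechanism to transfer that property to \(\mA_\nu\). The paper avoids this entirely by comparing \(\tilde\alpha\) not to \(\alpha\) but to \(\alpha^k\): in \(\partial_u\bigl(\tilde\alpha_{s,u}\,\alpha^k_{u,t}\,\E_{B_{k/8}(x_0)}A\bigr)\) the inner observable \(\alpha^k_{u,t}\E_{B_{k/8}(x_0)}A\) is supported in the finite set \(B_k(x_0)\) and hence lies in \(\algloc\), so the generation property for \(\tilde\alpha\) applies verbatim. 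The resulting terms are then bounded by exactly the same four estimates as in the Cauchy argument, proving \(\norm{\tilde\alpha_{s,t}A-\alpha^k_{s,t}A}\to0\) and hence \(\tilde\alpha=\alpha\), without ever invoking (ii).
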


While we excluded \(\nu=0\) in the bound~\eqref{eq:short-time-locality-of-automorphism-in-nu-norm}, from the convergence and properties of the automorphisms \(\alpha^k_{s,t}\), one immediately has \(\norm{\alpha_{s,t} \, A } \leq \norm{A}\) for all \(A\in \mA\) and \(s\),~\(t\in I\) with \(\abs{t-s} \leq \tau\).
Let us also stress that the bound~\eqref{eq:short-time-locality-of-automorphism-in-nu-norm} is influenced by \(C_\Phi\) as it only holds for \(\abs{t-s} \leq \tau = 1/(4 \, \vlr \, C_\Phi)\), even though~\(\gamma_\nu\) can be chosen uniformly for all~\(\Phi\) with the specified decay functions.

The idea of the proof is the following.
For the part of \(A\) that is localized in \(B_{k/8}(x_0)\), the restricted evolution \(\alpha_{s,s+\delta t}^k\) with Lieb-Robinson velocity \(v_{\mathrm{LR}} = c_\mathrm{LR} \, C_\Phi \, (1+\tfrac{k}{2})\) is a good approximation of \(\alpha_{s,s+\delta t}\) for large \(k\), as long as the enlarged support \(B_{k/8 + v_{\mathrm{LR}}\,\delta t}(x_0)\) is far from the boundary of \(B_{k/2}(x_0)\).
And this is the case for \(c_\mathrm{LR} \, C_\Phi \, \delta t \leq \tfrac{1}{4}\).
The actual proof is technically more difficult, because the Lieb-Robinson velocity only captures the growth of the support of most of the observable and one has to estimate the tails carefully.

The short-time result can then be extended by concatenation to existence and uniqueness for all times and a Lieb-Robinson type estimate with an exponential light cone.

\begin{theorem}
    \label{thm:all-times}
    The time-dependent zero-chain \(\Phi\) generates a unique cocycle of automorphisms \((\alpha_{s,t})_{s,t \in I}\).
    The cocycle \((\alpha^k_{s,t})_{s,t \in I}\) converges strongly to this cocycle as \(k \to \infty\), in the sense that for all \(A\in \mA\) and \(s\),~\(t \in I\) one has \(\alpha^k_{s,t}\, A \to \alpha_{s,t}\, A\) as \(k\to \infty\).

    Setting \(\mu \coloneq \min\paren[\big]{\nu_F-(2D+2), \nu_G-(D+2)}\), it holds that for each \(\nu \in \intervaloo{0,\mu}\), there are \(C_\nu > 0\), \(\gamma_\nu > 0\) that do not depend on \(\Phi\), such that for all \(s\),~\(t \in I\) and \(A\in \mA_\nu\) we have the bound
    \begin{equation}
        \label{eq:locality-of-automorphism-in-nu-norm}
        \norm{\alpha_{s,t} \, A }_{\nu,x_0}
        \leq
        C_\nu \, \e^{\gamma_\nu \, C_\Phi \, \abs{t-s}} \, \norm{A}_{\nu,x_0} .
    \end{equation}
    In particular, it holds that \(\alpha_{s,t}\, A \in \mA_\nu\).
\end{theorem}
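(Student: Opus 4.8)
\emph{Strategy.}
The plan is to build the global cocycle by concatenating the short-time cocycle from \cref{thm:small-times}. Fix $s,t\in I$, say $s\le t$, and choose a partition $s=t_0<t_1<\dots<t_N=t$ with $\abs{t_{i+1}-t_i}\le\tau$ for all $i$; set
\[
    \alpha_{s,t}:=\alpha_{t_0,t_1}\,\alpha_{t_1,t_2}\cdots\alpha_{t_{N-1},t_N},
\]
each factor being well defined since its time span is at most $\tau$, and put $\alpha_{s,t}:=\alpha_{t,s}^{-1}$ for $s>t$. To see that this does not depend on the partition, note that \cref{thm:small-times} asserts that on any subinterval of length $\le\tau$ the family $(\alpha_{s,t})$ already satisfies the cocycle relation; hence inserting an extra breakpoint inside one of the subintervals $[t_i,t_{i+1}]$ leaves the product unchanged, and any two admissible partitions have a common refinement. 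Each $\alpha_{s,t}$ is then an automorphism as a composition of automorphisms, and the cocycle identity $\alpha_{s,t}\,\alpha_{t,u}=\alpha_{s,u}$ follows by choosing a partition that contains the intermediate time as a breakpoint (the reversed-order cases reduce to this ordered one).

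\emph{Strong convergence, generation, and uniqueness.}
Since $\Phi^k\in\mathcal{Z}_{G,I}$ is bounded, $(\alpha^k_{s,t})$ is a genuine cocycle for all times and therefore factorizes over the same partition, $\alpha^k_{s,t}=\alpha^k_{t_0,t_1}\cdots\alpha^k_{t_{N-1},t_N}$. Writing the difference $\alpha^k_{s,t}A-\alpha_{s,t}A$ as a telescoping sum in which one factor at a time is exchanged, and using that the remaining $\alpha^k$-factors on the left are isometries, each summand is bounded by $\norm{(\alpha^k_{t_j,t_{j+1}}-\alpha_{t_j,t_{j+1}})B_j}$ for the fixed element $B_j:=\alpha_{t_{j+1},t_{j+2}}\cdots\alpha_{t_{N-1},t_N}A\in\mA$. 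By the short-time convergence in \cref{thm:small-times} each such term tends to $0$ as $k\to\infty$, and since $N$ is fixed there are finitely many of them, proving $\alpha^k_{s,t}A\to\alpha_{s,t}A$ for all $s,t\in I$ and $A\in\mA$. For the generation property, fix $t$ in the interior of its last subinterval and write $\alpha_{s,t}=\alpha_{s,t_{N-1}}\,\alpha_{t_{N-1},t}$; the first factor is independent of $t$ and the second is generated by $\Phi$ on a short interval, so $\partial_t\alpha_{s,t}A=\alpha_{s,t_{N-1}}\,\alpha_{t_{N-1},t}\,\i\,\mL_{\Phi(t)}A=\alpha_{s,t}\,\i\,\mL_{\Phi(t)}A$ for $A\in\algloc$. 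Uniqueness follows by restricting any competing cocycle generated by $\Phi$ to each short subinterval, where \cref{thm:small-times} forces agreement with $\alpha$, and concatenating.

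\emph{Exponential light cone.}
For $A\in\mA_\nu$ the short-time bound \eqref{eq:short-time-locality-of-automorphism-in-nu-norm} gives $\alpha_{t_i,t_{i+1}}A\in\mA_\nu$ with $\norm{\alpha_{t_i,t_{i+1}}A}_{\nu,x_0}\le\gamma_\nu\norm{A}_{\nu,x_0}$; since this bound is centred at the \emph{fixed} point $x_0$, it may be iterated over the $N$ factors to yield
\[
    \norm{\alpha_{s,t}A}_{\nu,x_0}\le\gamma_\nu^{\,N}\,\norm{A}_{\nu,x_0},
    \qquad N=\ceil{\abs{t-s}/\tau}.
\]
Choosing the partition equidistant gives $N\le 4\,\vlr\,C_\Phi\,\abs{t-s}+1$, and $\gamma_\nu\ge1$ (seen from the case $s=t$). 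Hence $\gamma_\nu^{\,N}\le\gamma_\nu\,\e^{(4\,\vlr\,\ln\gamma_\nu)\,C_\Phi\,\abs{t-s}}$, which is \eqref{eq:locality-of-automorphism-in-nu-norm} after relabelling $C_\nu:=\gamma_\nu$ and taking the exponential rate to be $4\,\vlr\,\ln\gamma_\nu$. Both are independent of $\Phi$ because the short-time constant $\gamma_\nu$ and $\vlr$ are, while the explicit factor $C_\Phi$ is exactly the one appearing in the exponent through $\tau=1/(4\,\vlr\,C_\Phi)$.

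\emph{Main obstacle.}
The quantitative estimate is essentially free once \cref{thm:small-times} is available, so the genuine difficulty resides in that short-time result rather than here. Within the present argument the points requiring care are the partition-independence that makes the concatenated definition consistent, and keeping the number $N$ of telescoped factors fixed while $k\to\infty$ in the convergence step so that the finitely many vanishing terms indeed control the whole difference; the differentiability in the generation property must likewise be checked by adapting the partition so that $t$ lies in an open subinterval.
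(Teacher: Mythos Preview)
Your proof is correct and follows essentially the same approach as the paper: concatenate the short-time automorphisms from \cref{thm:small-times}, use a telescoping sum over the partition to pass from $\alpha^k_{s,t}$ to the product, and iterate the short-time bound~\eqref{eq:short-time-locality-of-automorphism-in-nu-norm} to obtain the exponential growth estimate. The only cosmetic difference is the logical order: you \emph{define} $\alpha_{s,t}$ as the product and obtain partition-independence via common refinement, whereas the paper defines it as the strong limit of $\alpha^k_{s,t}$ and gets partition-independence for free from uniqueness of the limit; both routes are equivalent here.
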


As for the short-time result, we immediately obtain \(\norm{\alpha_{s,t} \, A} \leq \norm{A}\) for all \(A\in \alg\) and \(s\),~\(t\in I\).

Moreover, the bound~\eqref{eq:locality-of-automorphism-in-nu-norm} implies the usual commutator Lieb-Robinson bound with an exponential light cone:
For all \(k \geq0\) and \(Y\subset \Gamma\) with \(B_k(x_0)\cap Y = \emptyset\) and all \(A\in \alg_{B_k(x_0)}\), \(B\in \alg^+_Y\), denoting \(r=\dist[\big]{B_k(x_0), Y}\), one has for all \(0<c<1\)
\begin{align*}
    \norm{
        \commutator{\alpha_{s,t} \, A, B}
    }
    &\leq
    2 \, \norm{( 1- \E_{B_{k+cr}(x_0)}) \, \alpha_{s,t} \, A} \, \norm{B}
    + \norm{
        \commutator{
            \E_{B_{k+cr}(x_0)} \, \alpha_{s,t} \, A
            ,
            B
        }
    }
    \\&\leq
    2 \, (1+k+cr)^{-\nu} \, \norm{\alpha_{s,t} \, A}_{\nu,x_0} \, \norm{B}
    \\&\leq
    2 \, (1+k+cr)^{-\nu} \, (1+k)^\nu \, C_\nu\, \e^{\gamma_\nu \, C_\Phi \, \abs{t-s}} \, \norm{A} \, \norm{B}
    \\&\leq
    2 \, \norm{A} \, \norm{B} \, C_\nu \, \e^{\gamma_\nu \, C_\Phi \, \abs{t-s} - \nu \ln(1+cr/(1+k))}
\end{align*}
and therefore
\begin{equation}
    \label{eq:commutator-LRB}
    \norm{
        \commutator{\alpha_{s,t} \, A, B}
    }
    \leq
    2 \, \norm{A} \, \norm{B} \, C_\nu \, \e^{\gamma_\nu \, C_\Phi \, \abs{t-s} - \nu \ln(1+r/(1+k))}
    .
\end{equation}
This bound is referred to as a Lieb-Robinson bound with exponential light cone, since the right-hand side is small whenever
\begin{equation*}
    r
    \gg
    (1+k) \, \e^{C_\Phi \, \gamma_\nu \, \nu^{-1} \, \abs{t-s}}
    .
\end{equation*}

\section{Proofs}
\label{sec:proofs}

We provide the proof of \cref{thm:small-times,thm:all-times} in the following sections.
Some technical lemmas, which are necessary for the proofs, are given in \cref{app:technical-lemmas}.

\subsection{Existence and uniqueness for short times:\linebreak Proof of Theorem~\ref{thm:small-times}}
\label{sec:proof-short-times}

We first show that for every \(A\in\mA=\mA_0\) and all \(s\),~\(t\in I\) with \(\abs{t-s} \leq \tau\) the sequence \((\alpha^k_{s,t} \, A)_{k\in \N}\) is a Cauchy sequence in \(\mA\) with respect to the operator norm.
By completeness, it has a limit, which we denote \(\alpha_{s,t} \, A\).
And since this convergence is actually uniform in \(s\) and \(t\), we can later conclude that \(\alpha_{s,t}\) is the unique cocycle generated by \(\Phi\).
To prove the estimate~\eqref{eq:short-time-locality-of-automorphism-in-nu-norm} for \(\nu\in \intervaloo{0,\mu}\), we need a similar estimate for \(\norm{\alpha^l_{s,t} \, A - \alpha^k_{s,t} \, A}\) with explicit decay of the form \((1+k)^{-\nu}\) for all \(l\geq k\) and \(A\in \alg_{\nu}\).
To not do the same calculation twice, we treat all \(\nu\in[0,\mu)\) at once.

\paragraph{Cauchy type estimate}
Let \(\nu \in \intervalco{0,\mu}\), \(A \in \mA_\nu\) and \(s\),~\(t \in I\) with \(\abs{t-s} \leq \tau\), where, without loss of generality, we assume \(s \leq t\).
For any \(k\),~\(l \in \intervalco{0,\infty}\), with \(k \leq l\) we find
\begin{subequations}
    \begin{align}
        \Alignindent
        \norm{\alpha^l_{s,t}\, A - \alpha^k_{s,t}\, A}
        \nonumber
        \\&\leq
        \norm{(\alpha^l_{s,t} - \alpha^k_{s,t}) \, ( 1- \E_{B_{k/8}(x_0)})\, A}
        + \norm{(\alpha^l_{s,t} - \alpha^k_{s,t}) \, \E_{B_{k/8}(x_0)}\, A}
        \nonumber
        \\&\leq
        2\, \norm{( 1- \E_{B_{k/8}(x_0)})\, A}
        + \int_s^t \d u \, \norm{\partial_u\, \alpha^l_{s,u}\,\alpha^k_{u,t}\, \E_{B_{k/8}(x_0)} \, A}
        \nonumber
        \\&\leq
        2\, \norm{( 1- \E_{B_{k/8}(x_0)})\, A}
        + \int_s^t \d u \, \sumstack[lr]{x \in B_{l/2}(x_0)} \, \norm{\commutator{\Phi^l_x(u) - \Phi^k_x(u) , \alpha^k_{u,t}\, \E_{B_{k/8}(x_0)} \, A}}
        \nonumber
        \\&\leq
        2\, \norm{( 1- \E_{B_{k/8}(x_0)})\, A}
        \label{eq:splitting-locality-A}
        \\&\relskip + \int_s^t \d u \qquad \sumstack[lr]{x\in B_{k/2}(x_0)} \quad \norm{\commutator{(\E_{B_{l/2}(x)} - \E_{B_{k/2}(x)}) \,\Phi_x(u) , \alpha^k_{u,t}\, \E_{B_{k/8}(x_0)} \, A}}
        \label{eq:splitting-locality-interaction-x-inside}
        \\&\relskip + \int_s^t \d u \qquad \sumstack[lr]{x\in B_{l/2}(x_0) \setminus B_{k/2}(x_0)} \quad \norm{\commutator{\E_{B_{l/2}(x)} \, ( 1 - \E_{B_{\dist{x,x_0}/4}(x)} ) \, \Phi_x(u) , \alpha^k_{u,t}\, \E_{B_{k/8}(x_0)}\, A}}
        \label{eq:splitting-locality-interaction-x-outside-i}
        \\&\relskip + \int_s^t \d u \qquad \sumstack[lr]{x\in B_{l/2}(x_0) \setminus B_{k/2}(x_0)} \quad \norm{\commutator{\E_{B_{l/2}(x)} \, \E_{B_{\dist{x,x_0}/4}(x)} \, \Phi_x(u) , \alpha^k_{u,t}\, \E_{B_{k/8}(x_0)}\, A}}
        .
        \label{eq:splitting-locality-interaction-x-outside-ii}
    \end{align}
\end{subequations}
We bound each of the four terms separately.
For \(\nu=0\), the term~\eqref{eq:splitting-locality-A} converges to~\(0\) because \(A\) is quasi-local, as explained before \cref{def:decay-function}.
For \(\nu>0\), we have \(A\in \alg_\nu\) and thus~\eqref{eq:splitting-locality-A} is bounded by
\begin{equation*}
    \text{\eqref{eq:splitting-locality-A}}
    \leq
    2 \, \frac{1}{(1+\tfrac{k}{8})^\nu} \, \norm{A}_{\nu,x_0}
    \leq
    2\, \frac{8^\nu }{(1+k)^\nu} \, \norm{A}_{\nu,x_0}
    .
\end{equation*}
The remaining estimates all work for \(\nu \geq 0\).
The second and third terms are bounded using only the decay of the quasi-local terms of \(\Phi\).
In both cases we use the trivial bound for the commutator.
The term~\eqref{eq:splitting-locality-interaction-x-inside} is bounded by
\begin{align*}
    \text{\eqref{eq:splitting-locality-interaction-x-inside}}
    &\leq
    2 \, \tau \sup_{u\in I} \sum_{x\in B_{k/2}(x_0)} \norm{\E_{B_{l/2}(x)} \, (1 - \E_{B_{k/2}(x)}) \,\Phi_x(u)} \, \norm{A}
    \\&\leq
    2 \, \tau \sup_{u\in I} \sum_{x\in B_{k/2}(x_0)} \norm{\Phi_x(u)}_{G,x} \, G(k/2) \, \norm{A}
    \\&\leq
    2 \, \tau \sum_{x\in B_{k/2}(x_0)} C_\Phi \, \paren[\big]{1+ \dist{x,x_0}} \, G(k/2) \, \norm{A}
    \\&\leq
    2 \, \tau \, C_{\mathrm{vol}} \, C_\Phi \, \frac{(1+k/2)^{D+1+\nu+\eps}\, G(k/2)}{(1+k/2)^{\nu+\eps}} \, \norm{A}
    \\&\leq
    \frac{C_{\mathrm{vol}}}{2 \, \vlr} \, \frac{2^{\nu+\eps}\, C}{(1+k)^{\nu+\eps}}\, \norm{A}
    ,
\end{align*}
for some \(C>0\) and an \(\eps>0\), such that \(D+1+\nu+\eps < \nu_G\).
Here we used that \(D+1+\nu+\eps < \nu_G\) and therefore \(k \mapsto (1+\frac{k}{2})^{D+1+\nu+\eps}\, G(\frac{k}{2})\) is bounded.
For~\eqref{eq:splitting-locality-interaction-x-outside-i} we apply the same bounds to the commutator and then use the decay of \(G\) together with the volume-growth assumption to treat the infinite sum and obtain the upper bound
\begin{align*}
    \text{\eqref{eq:splitting-locality-interaction-x-outside-i}}
    &\leq
    2 \, \tau \, \sup_{u\in I}\, \sum_{x\in \Gamma \setminus B_{k/2}(x_0)} \norm{( 1 - \E_{B_{\dist{x,x_0}/4}(x)} ) \, \Phi_x(u)} \, \norm{A}
    \\&\leq
    2 \, \tau \sum_{x\in \Gamma \setminus B_{k/2}(x_0)} C_\Phi \, \paren[\big]{1+\dist{x,x_0}} \, G\paren[\big]{\dist{x,x_0}/4} \, \norm{A}
    \\&\leq
    2 \, \tau \, C_\Phi \, \sup_{m \geq k/2 } \, (1 + m)^{D+2+\eps} \, G(m/4) \, \sum_{x \in \Gamma } \frac{1}{\paren[\big]{1 + \dist{x,x_0}}^{D+1+\eps}} \, \norm{A}
    \\&\leq
    \frac{1}{2 \, \vlr} \, \frac{C}{(1+k)^{\nu+\eps}} \, \norm{A}
    ,
\end{align*}
for some \(C>0\) and an \(\eps>0\), such that \(D+2+\nu+2\eps < \nu_G\).
The last sum converges due to Lemma~\ref{lem:summability}, and we used that the map \(m \mapsto (1+m)^{D+2+\nu+2\eps}\, G(\frac{m}{4})\) is bounded.
To bound~\eqref{eq:splitting-locality-interaction-x-outside-ii} we apply the Lieb-Robinson bound from \cref{ass:LRB} for the cocycle of automorphisms \(\alpha^k\), which is generated by the uniformly \(G\)-localized time-dependent zero-chain \(\Phi^k\).
For this, we first note that
\begin{align*}
    \Alignindent
    \dist[\big]{B_{\dist{x,x_0}/4}(x), B_{k/8}(x_0)}
    - \vlr \, \tnorm{\Phi^k}_G \, \abs{t-s}
    \\&\geq
    \tfrac{3}{4} \, \dist{x,x_0}-k/8
    - \vlr \, C_\Phi \, (1+k/2) \, \tau
    \\&\geq
    \tfrac{3}{4} \, \dist{x,x_0}-k/8
    - (1+k/2)/4
    \\&\geq
    \tfrac{3}{4} \, \dist{x,x_0} - (k+1)/4
    .
\end{align*}
Then, \eqref{eq:splitting-locality-interaction-x-outside-ii} is bounded by
\begin{align*}
    \text{\eqref{eq:splitting-locality-interaction-x-outside-ii}}
    &\leq
    \tau \, C_{\mathrm{LR}} \, \norm{A} \sumstack{x\in \Gamma \setminus B_{k/2}(x_0)} \sup_{u\in I} \, \norm{\Phi_x(u)} \, \abs{B_{k/8}(x_0)}
    \, F\paren[\big]{
        \pospart[\big]{\tfrac{3}{4} \, \dist{x,x_0} - \tfrac{k+1}{4}}
    }
    \\&\leq
    \tau \, C_{\mathrm{LR}} \, \norm{A} \sumstack[r]{x\in \Gamma \setminus B_{k/2}(x_0)} \, C_\Phi \, \paren[\big]{1+\dist{x,x_0}} \, C_{\mathrm{vol}} \, (1+k/8)^D
    \, F\paren[\big]{
        \pospart[\big]{\tfrac{3}{4} \, \dist{x,x_0} - \tfrac{k+1}{4}}
    }
    \\&\leq
    \tau \, C_{\mathrm{LR}} \, C_\Phi \, C_{\mathrm{vol}} \, \norm{A} \sumstack[r]{x\in \Gamma} \, \paren[\big]{1+\dist{x,x_0}}^{-(D+1+\epsi)}
    \, \sup_{\mathclap{m\geq k/2}} \, (1+m)^{2D+2+\epsi} \, F\paren[\big]{
        \pospart[\big]{\tfrac{3}{4} \, m - \tfrac{k+1}{4}}
    }
    \\&\leq
    \frac{C_{\mathrm{LR}} \, C_\mathrm{vol}}{4 \, \vlr} \, \frac{C}{{(1+k)^{\nu+\eps}}} \, \norm{A}
    ,
\end{align*}
for some \(C>0\) and an \(\eps>0\), such that \(2D+2+\nu+2\eps<\nu_F\).
This time we used that the map
\(
    k
    \mapsto \sup_{m\geq k/2} \, (1+m)^{2D+2+\nu+2\eps} \, F\paren[\big]{
        \pospart[\big]{\tfrac{3}{4} \, m - \tfrac{k+1}{4}}
    }
\)
is bounded, which we show in Lemma~\ref{lem:sup-function-bound}.

Combining the four bounds for \(\nu=0\) we have shown that for all \(A\in \mA\)
\begin{equation*}
    \norm{\alpha^l_{s,t}\, A - \alpha^k_{s,t}\, A}
    \to
    0
    \quad\text{uniformly for all \(s\),~\(t\in I\) with \(\abs{t-s} \leq \tau\)}.
\end{equation*}
And for \(\nu\in \intervaloo{0,\mu}\), we have shown that there is a constant \(\tilde{\gamma}_\nu>0\), that does not depend on \(\Phi\), such that for all \(A\in \alg_\nu\) and \(s\), \(t\in I\) satisfying \(\abs{t-s} \leq \tau = 1/(4 \, \vlr \, C_\Phi)\) and all \(l \geq k \in \intervalco{0,\infty}\) it holds that
\begin{equation}
    \label{eq:cauchy-bound}
    \norm{\alpha^l_{s,t}\, A - \alpha^k_{s,t}\, A}
    \leq
    \frac{\tilde{\gamma}_\nu}{(1+k)^\nu} \, \norm{A}_{\nu,x_0}
    .
\end{equation}

\paragraph{Convergence}
By the Cauchy estimate for \(\nu=0\), the sequence \((\alpha^k_{s,t}\, A)_{k\in \N_0}\) converges for all \(A\in \alg\), and we denote its limit by \(\alpha_{s,t}\, A\).
Moreover, this convergence is uniform for all \(s\),~\(t\in I\) with \(\abs{t-s}\leq\tau\).

\paragraph{Cocycle and generator properties}
Let \(I'\subset I\) be a subinterval with \(\abs{I'} \leq \tau\).
It is easy to see that \((\alpha_{s,t})_{s,t \in I'}\) is a strongly continuous cocycle of automorphisms on \(\mA\), since it inherits all the relevant properties from the approximations \((\alpha^k_{s,t})_{s,t \in I'}\).

To show that this cocycle is generated by the time-dependent zero-chain \(\Phi|_{I'\times \Gamma}\), let \(s\),~\(t\in I'\).
Then, note that for all \(h \in \R\) such that \(t+h \in I'\) and all \(A \in \mA_{\mathrm{loc}}\), due to the strong continuity of \((\alpha^k_{s,t})_{s,t \in I'}\) and continuity of \(u \mapsto \mL_{\Phi^k(u)} \, A\), it holds that
\begin{equation*}
    \alpha^k_{s,t+h}\, A - \alpha^k_{s,t}\, A
    =
    \int_{t}^{t+h} \d u \, \alpha^k_{s,u}\, \i \, \mL_{\Phi^k(u)} \, A
    .
\end{equation*}
Together with the uniform convergence and Lemma~\ref{lem:convergence-of-liouvillian} this gives us in the limit \(k \to \infty\) that
\begin{equation*}
    \alpha_{s,t+h}\, A - \alpha_{s,t}\, A
    =
    \int_{t}^{t+h} \d u \, \alpha_{s,u}\, \i \, \mL_{\Phi(u)} \, A
    .
\end{equation*}
By the strong continuity of \((\alpha_{s,t})_{s,t \in I'}\) and Lemma~\ref{lem:continuity-of-liouvillian}, it follows that
\begin{equation*}
    \partial_t \, \alpha_{s,t} \, A
    =
    \alpha_{s,t} \, \i \, \mL_{\Phi(t)} \, A
    .
\end{equation*}

\paragraph{Uniqueness} To show uniqueness, let \((\tilde{\alpha}_{s,t})_{s,t \in I'}\) be any cocycle of automorphisms, generated by~\(\Phi|_{I'\times \Gamma}\).
We can show that \((\alpha^k_{s,t})_{s,t \in I'}\) also converges strongly to it, thereby showing that it must be identical to \((\alpha_{s,t})_{s,t \in I'}\).
For this, let \(k\in \intervalco{0,\infty}\), \(A\in \mA\) and bound \(\norm{\tilde{\alpha}_{s,t}\, A - \alpha^k_{s,t}\, A}\) exactly as we did to arrive at the terms~\eqref{eq:splitting-locality-A}–\eqref{eq:splitting-locality-interaction-x-outside-ii}.
This results in
\begin{subequations}
    \begin{align*}
        \Alignindent
        \norm{\tilde{\alpha}_{s,t}\, A - \alpha^k_{s,t}\, A}
        \\&\leq
        \norm{(\tilde{\alpha}_{s,t} - \alpha^k_{s,t}) \, ( 1- \E_{B_{k/8}(x_0)})\, A}
        + \norm{(\tilde{\alpha}_{s,t} - \alpha^k_{s,t}) \, \E_{B_{k/8}(x_0)}\, A}
        \\&\leq
        2\, \norm{( 1- \E_{B_{k/8}(x_0)})\, A}
        + \int_s^t \d u \, \norm{\partial_u\, \tilde{\alpha}_{s,u}\,\alpha^k_{u,t}\, \E_{B_{k/8}(x_0)} \, A}
        \\&\leq
        2\, \norm{( 1- \E_{B_{k/8}(x_0)})\, A}
        + \int_s^t \d u \, \sumstack[lr]{x \in \Gamma} \, \norm{\commutator{\Phi_x(u) - \Phi^k_x(u) , \alpha^k_{u,t}\, \E_{B_{k/8}(x_0)} \, A}}
        \\&\leq
        2\, \norm{( 1- \E_{B_{k/8}(x_0)})\, A}
        \\&\relskip + \int_s^t \d u \qquad \sumstack[lr]{x\in B_{k/2}(x_0)} \quad \norm{\commutator{(1 - \E_{B_{k/2}(x)}) \,\Phi_x(u) , \alpha^k_{u,t}\, \E_{B_{k/8}(x_0)} \, A}}
        \\&\relskip + \int_s^t \d u \qquad \sumstack[lr]{x\in \Gamma \setminus B_{k/2}(x_0)} \quad \norm{\commutator{( 1 - \E_{B_{\dist{x,x_0}/4}(x)} ) \, \Phi_x(u) , \alpha^k_{u,t}\, \E_{B_{k/8}(x_0)}\, A}}
        \\&\relskip + \int_s^t \d u \qquad \sumstack[lr]{x\in \Gamma \setminus B_{k/2}(x_0)} \quad \norm{\commutator{\E_{B_{\dist{x,x_0}/4}(x)} \, \Phi_x(u) , \alpha^k_{u,t}\, \E_{B_{k/8}(x_0)}\, A}}
        .
    \end{align*}
\end{subequations}
These four terms can be bounded by the exact same steps used to bound the previous four terms, thus showing that \(\norm{\tilde{\alpha}_{s,t}\, A - \alpha^k_{s,t}\, A} \to 0\) as \(k \to \infty\).
Hence, \(\tilde{\alpha}_{s,t} \, A = \alpha_{s,t} \, A\) for all \(s\),~\(t\in I'\).

\paragraph{Growth estimate}
Clearly for all \(A\in \alg_\nu\) and \(s\),~\(t\in I\) with \(\abs{t-s} \leq \tau\), it holds that \(\norm{\alpha_{s,t} \, A} \leq \norm{A}\).
And to prove~\eqref{eq:short-time-locality-of-automorphism-in-nu-norm} it is left to estimate the locality of \(\alpha_{s,t} \, A\).
For this, we bound
\begin{align*}
    \Alignindent
    \norm{
        (1-\E_{B_k(x_0)})
        \, \alpha_{s,t}
        \, A
    }
    \\&=
    \norm{
        (1-\E_{B_k(x_0)})
        \, (
        \alpha_{s,t}
        - \alpha^k_{s,t} \, \E_{B_k(x_0)}
        )
        \, A
    }
    \\&\leq
    \norm{
        (1-\E_{B_k(x_0)})
        \, (\alpha_{s,t} - \alpha^k_{s,t})
        \, A
    }
    + \norm{
        (1-\E_{B_k(x_0)})
        \, \alpha^k_{s,t}
        \, (1 - \E_{B_k(x_0)})
        \, A
    }
    \\&\leq
    2 \, \norm{
        (\alpha_{s,t} - \alpha^k_{s,t}) \, A
    }
    + \frac{2 \, \norm{ A }_{\nu,x_0}}{(1+k)^\nu}
    \\&\leq
    2 \, \frac{\tilde{\gamma}_\nu \, \norm{ A}_{\nu,x_0}}{(1+k)^\nu}
    + \frac{2\,\norm{ A }_{\nu,x_0}}{(1+k)^\nu}
    ,
\end{align*}
for all \(k\in \intervalco{0,\infty}\) using~\eqref{eq:cauchy-bound} and locality of~\(A\).
This proves that \(\alpha_{s,t} \, \alg_\nu \subset \alg_\nu\) and
\begin{equation*}
    \norm{\alpha_{s,t}\, A}_{\nu,x_0}
    \leq
    3 \, (\tilde{\gamma}_\nu + 1) \, \norm{ A}_{\nu,x_0}.
\end{equation*}

\subsection{Existence and uniqueness for all times:\linebreak Proof of Theorem~\ref{thm:all-times}}
\label{sec:proof-all-times}

Next, we prove \cref{thm:all-times} by lifting the results from \cref{thm:small-times} to all times.

\paragraph{Convergence and cocycle and generator properties}
Let \(\tau = 1/\paren{4\,\vlr\,C_\Phi}\), let \(s\),~\(t \in I\) and \(A \in \mA\).
Without loss of generality we assume that \(s\leq t\).
We choose an \(N\in \N_0\) and an increasing tuple \((t_i)_{i\in \List{0,\dotsc,N}}\) of elements of~\(I\), such that \(t_0=s\), \(t_N=t\), and \(t_{i+1}-t_i \leq \tau\).
We know by Theorem~\ref{thm:small-times} that for all \(i \in \List{0,\dotsc,N-1}\) the restriction \(\Phi|_{\intervalcc{t_i , t_{i+1}}\times \Gamma}\) of \(\Phi\) to \(\intervalcc{t_i , t_{i+1}}\) generates a unique cocycle of automorphisms \((\alpha_{s,t})_{s,t\in \intervalcc{t_i , t_{i+1}}}\) that can be approximated in a strong sense by the cocycle \((\alpha^k_{s,t})_{s,t\in \intervalcc{t_i , t_{i+1}}}\).
It holds that
\begin{align*}
    \Alignindent
    \norm[\bigg]{
        \alpha_{s,t}^k \, A
        - \paren[\bigg]{\prod_{i=0}^{N-1} \alpha_{t_i,t_{i+1}}} \, A
    }
    \\&=
    \norm[\bigg]{
        \paren[\bigg]{\prod_{i=0}^{N-1} \alpha_{t_i,t_{i+1}}^k} \, A
        - \paren[\bigg]{\prod_{i=0}^{N-1} \alpha_{t_i,t_{i+1}}} \, A
    }
    \\&\leq
    \sum_{j=0}^{N-1} \, \norm[\bigg]{
        \paren[\bigg]{\prod_{i=0}^{j-1} \alpha_{t_i,t_{i+1}}^k}
        \, \paren[\big]{\alpha_{t_j,t_{j+1}}^k - \alpha_{t_j,t_{j+1}}}
        \, \paren[\bigg]{\prod_{i=j+1}^{N-1} \alpha_{t_i,t_{i+1}}} \, A
    }
    \\&\leq
    \sum_{j=0}^{N-1} \, \norm[\bigg]{
        \paren[\big]{\alpha_{t_j,t_{j+1}}^k - \alpha_{t_j,t_{j+1}}}
        \, \paren[\bigg]{\prod_{i=j+1}^{N-1} \alpha_{t_i,t_{i+1}}} \, A
    }
    \\&\to 0 \quadtext{as} k\to\infty
    ,
\end{align*}
because \(\paren[\big]{\prod_{i=j+1}^{N-1} \alpha_{t_i,t_{i+1}}} \, A\) is a fixed element of \(\mA\) and \(\alpha_{t_j,t_{j+1}}^k \to \alpha_{t_j,t_{j+1}}\) strongly on~\(\mA\), by \cref{thm:small-times}.
Therefore, \(\alpha_{s,t}^k \, A\) converges to \(
    \alpha_{s,t}\, A
    \coloneq
    \paren[\big]{\prod_{i=0}^{N-1} \alpha_{t_i,t_{i+1}}} \, A
\).
In particular, \(\alpha_{s,t}\, A\) is independent of the choice of intermediate times \((t_i)_{i\in \List{0,\dotsc,N}}\).
It is now easy to see that \((\alpha_{s,t})_{s,t\in I}\) defines a cocycle of automorphisms and is generated by \(\Phi\), since it inherits all the relevant properties from the short time cocycles \((\alpha_{s,t})_{s,t\in \intervalcc{t_i , t_{i+1}}}\).

\paragraph{Uniqueness}
For the uniqueness, let \((\tilde{\alpha}_{s,t})_{s,t\in I}\) be any cocycle generated by \(\Phi\).
We can split it up in the same way as above
\begin{equation*}
    \tilde{\alpha}_{s,t} \, A
    =
    \paren[\bigg]{\prod_{i=0}^{N-1} \tilde{\alpha}_{t_i,t_{i+1}}} \, A.
\end{equation*}
We observe that for each \(i \in \List{0,\dotsc,N-1}\) the cocycle \((\tilde{\alpha}_{s,t})_{s,t\in \intervalcc{t_i,t_{i+1}}}\) is generated by \(\Phi|_{\intervalcc{t_i,t_{i+1}}\times \Gamma}\).
Together with the uniqueness statement of Theorem~\ref{thm:small-times}, this lets us conclude that \((\tilde{\alpha}_{s,t})_{s,t\in I} = (\alpha_{s,t})_{s,t\in I}\).

\paragraph{Growth estimate}
To obtain the bound, let \(A\in \mA_\nu\) for some \(\nu \in \intervaloo{0,\mu}\) and choose the tuple from above as \(t_i = s + i\, \tau\) for \(i \in \List{0, \dots , \floor[\big]{\frac{t-s}{\tau}}}\) and \(t_{\floor[\big]{\frac{t-s}{\tau}}+1} = t\).
With the bound of Theorem~\ref{thm:small-times} we now find
\begin{equation*}
    \norm{\alpha_{s,t} \, A}_{\nu,x_0}
    =
    \norm[\bigg]{\prod_{i=0}^{\floor[\big]{\frac{t-s}{\tau}}} \alpha_{t_i,t_{i+1}} \, A}_{\nu,x_0}
    \leq
    \gamma_\nu^{\floor[\big]{\frac{t-s}{\tau}}+1} \, \norm{A}_{\nu, x_0}
    \leq
    \gamma_\nu\, \e^{\frac{t-s}{\tau} \ln(\gamma_\nu)}\, \norm{A}_{\nu, x_0}
    .
\end{equation*}
Replacing \(1/\tau = 4 \, \vlr \, C_\Phi\) and recalling that \(\gamma_\nu\) is independent of \(\Phi\), we obtain the estimate~\eqref{eq:locality-of-automorphism-in-nu-norm}.

\appendix
\crefalias{section}{appendix}

\section{Technical lemmas}
\label{app:technical-lemmas}

In this section we refer to the decay functions \(F\) and \(G\) and the time-dependent zero-chains \(\Phi\) and \(\Phi^k\) as defined in Section~\ref{sec:res}.

\begin{lemma}\label{lem:summability}
    For all \(\eps>0\) the sum
    \begin{equation*}
        \sum_{x \in \Gamma } \frac{1}{\paren[\big]{1 + \dist{x,x_0}}^{D+1+\eps}}
    \end{equation*}
    converges absolutely.
\end{lemma}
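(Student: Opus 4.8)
The plan is to exploit the \(D\)-regularity of \(\Gamma\) through an annular decomposition centred at \(x_0\), playing the polynomial volume growth of balls off against the decay \((1+\dist{x,x_0})^{-(D+1+\eps)}\) of the summand. Since every term is nonnegative, the sum over the countable set \(\Gamma\) may be regrouped arbitrarily, so it suffices to bound a suitably grouped series.

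First I would partition \(\Gamma\) according to the integer part of the distance to \(x_0\): for \(n\in\N_0\) put \(A_n \coloneq \Set{x\in\Gamma \given n \leq \dist{x,x_0} < n+1}\). These sets are pairwise disjoint and exhaust \(\Gamma\), whence
\[
    \sum_{x\in\Gamma} \frac{1}{\paren[\big]{1+\dist{x,x_0}}^{D+1+\eps}}
    =
    \sum_{n=0}^{\infty} \sum_{x\in A_n} \frac{1}{\paren[\big]{1+\dist{x,x_0}}^{D+1+\eps}}.
\]
On \(A_n\) one has \(\dist{x,x_0}\geq n\), so the summand is at most \((1+n)^{-(D+1+\eps)}\). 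For the number of points in the annulus I would use the inclusion \(A_n\subseteq B_{n+1}(x_0)\) together with \(D\)-regularity, which gives \(\abs{A_n}\leq\abs{B_{n+1}(x_0)}\leq C_\mathrm{vol}\,(2+n)^D\leq 2^D\,C_\mathrm{vol}\,(1+n)^D\), the last step using \(2+n\leq 2(1+n)\) for \(n\geq 0\).

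Combining the two bounds, the inner sum is at most \(2^D\,C_\mathrm{vol}\,(1+n)^{D}\,(1+n)^{-(D+1+\eps)} = 2^D\,C_\mathrm{vol}\,(1+n)^{-(1+\eps)}\): the factor \((1+n)^D\) coming from the volume exactly cancels \(D\) of the \(D+1+\eps\) powers in the denominator. Summing over \(n\) then reduces everything to the convergent \(p\)-series \(\sum_{n=0}^{\infty}(1+n)^{-(1+\eps)}\) with exponent \(1+\eps>1\), yielding the finite bound \(2^D\,C_\mathrm{vol}\,\sum_{n\geq0}(1+n)^{-(1+\eps)}<\infty\). There is no genuine obstacle beyond this bookkeeping: the only place the hypotheses enter is the balance between the volume-growth exponent \(D\) and the decay exponent \(D+1+\eps\), and it is precisely the surplus exponent \(1\) (plus \(\eps\)) in the decay that makes the resulting one-dimensional series summable.
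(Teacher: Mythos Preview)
Your proof is correct and follows essentially the same approach as the paper: both use an annular decomposition around \(x_0\), bound the number of points in each annulus via the \(D\)-regularity of \(\Gamma\), and reduce to a convergent \(p\)-series with exponent \(1+\eps\). The only cosmetic difference is that the paper uses the annuli \(S_k = B_k(x_0)\setminus B_{k-1}(x_0)\) instead of your half-open shells \(A_n\), leading to the same bound up to harmless constants.
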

\begin{proof}
    For \(k \in \N\) we define \(S_k \coloneq B_k(x_0)\setminus B_{k-1}(x_0)\) and \(S_0= B_0(x_0)\).
    Due to the volume growth property of \((\Gamma,d)\) is holds that \(\abs{S_k} \leq C_{\mathrm{vol}} \, (1+k)^D\).
    From this we conclude
    \begin{align*}
        \sum_{x \in \Gamma } \frac{1}{\paren[\big]{1 + \dist{x,x_0}}^{D+1+\eps}}
        &\leq
        \sum_{k=1}^\infty \sum_{x \in S_k } \frac{1}{k^{D+1+\eps}} + S_0
        \leq
        \sum_{k=1}^\infty \frac{C_{\mathrm{vol}} \, (1+k)^D}{k^{D+1+\eps}} + C_{\mathrm{vol}}
        \\&\leq
        \sum_{k=1}^\infty \frac{C_{\mathrm{vol}} \, 2^D\, }{k^{1+\eps}} + C_{\mathrm{vol}}
        <
        \infty
        .\qedhere
    \end{align*}
\end{proof}

\begin{lemma}\label{lem:sup-function-bound}
    Let \(0 \leq \nu< \nu_F\).
    It holds that
    \begin{equation*}
        \sup_{k\geq 0} \, \sup_{m \geq k/2} \, (1+m)^\nu \, F\paren[\bigg]{\pospart[\bigg]{\frac{3m}{4} - \frac{k+1}{4}}}
        <
        \infty .
    \end{equation*}
\end{lemma}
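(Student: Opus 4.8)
The plan is to exploit the defining property of $\nu_F$ together with the side condition $m \geq k/2$, which forces the argument of $F$ to be bounded below by a positive multiple of $m$; the resulting decay of $F$ then more than compensates the polynomial prefactor $(1+m)^\nu$.

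First I would record what the hypothesis $0 \leq \nu < \nu_F$ gives. The set $\Set{\mu \geq 0 \given \sup_{r\geq0} F(r)\,(1+r)^\mu < \infty}$ is closed downward, since $(1+r)^\mu$ is nondecreasing in $\mu$ for $r \geq 0$; hence it is an interval with supremum $\nu_F$ and contains every $\mu < \nu_F$. In particular $\nu$ itself belongs to it, so that $C_F := \sup_{r\geq0} F(r)\,(1+r)^\nu < \infty$ and therefore $F(r) \leq C_F\,(1+r)^{-\nu}$ for all $r \geq 0$. I also note $\norm{F}_\infty := \sup_{r\geq0} F(r) < \infty$, as $F$ is bounded by definition of a decay function.

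Next I would fix $k \geq 0$ and $m \geq k/2$, write $s := \pospart{\tfrac{3m}{4} - \tfrac{k+1}{4}}$, and split into two cases. If $\tfrac{3m}{4} \leq \tfrac{k+1}{4}$, then $s = 0$; combining $3m \leq k+1$ with $k \leq 2m$ (from $m \geq k/2$) yields $3m \leq 2m + 1$, i.e.\ $m \leq 1$, so that $(1+m)^\nu\,F(s) \leq 2^\nu\,\norm{F}_\infty$. Otherwise $s = \tfrac{3m}{4} - \tfrac{k+1}{4} > 0$, and $k+1 \leq 2m+1$ gives $s \geq \tfrac{m-1}{4}$, whence $1 + s \geq \tfrac{m+3}{4} \geq \tfrac{1+m}{4}$; feeding this into the decay bound produces $(1+m)^\nu\,F(s) \leq C_F\,(1+m)^\nu\,(1+s)^{-\nu} \leq C_F\,4^\nu$. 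Taking the supremum over all admissible $k$ and $m$ then bounds the double supremum by $\max\paren{2^\nu\,\norm{F}_\infty,\,C_F\,4^\nu}$, which is finite.

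I expect the only delicate point to be the bookkeeping with the two rescalings: the coefficient $\tfrac34$ in front of $m$ and the coefficient $\tfrac14$ in front of $k+1$ have to conspire with $m \geq k/2$ to leave a strictly positive coefficient of $m$ in the lower bound $s \geq \tfrac{m-1}{4}$, which is exactly what makes the decay of $F$ win; the degenerate corner where the positive part collapses to $0$ (forcing $m \leq 1$) then has to be treated by hand, as above, since there the decay of $F$ is unavailable.
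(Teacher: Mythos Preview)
Your proof is correct. Both your argument and the paper's rest on the same observation: the constraint \(m \geq k/2\) forces \(\tfrac{3m}{4} - \tfrac{k+1}{4} \geq \tfrac{m-1}{4}\), so the argument of \(F\) grows linearly in \(m\) and the decay of \(F\) (via \(\nu < \nu_F\)) dominates the polynomial prefactor.

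The execution differs. The paper removes the positive part by restricting to \(k \geq 2\), then performs the substitutions \(m \to m + (k+1)/3\) and \(k \to k+2\) to reduce to \(\sup_{m\geq 0} (1+\tfrac{3m}{4})^\nu F(\tfrac{3m}{4}) < \infty\). You instead bound directly via the pointwise estimate \(F(r) \leq C_F\,(1+r)^{-\nu}\) and handle the degenerate region where the positive part vanishes by an explicit case split (showing it forces \(m \leq 1\)). Your treatment of this boundary case is cleaner than the paper's, which dismisses \(k < 2\) with the somewhat imprecise phrase ``the expression is bounded away from \(\infty\)''; since \(F\) is not assumed monotone, that step actually requires exactly the kind of argument you give. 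The trade-off is that the paper's substitution makes the reduction to \(\sup_r (1+r)^\nu F(r) < \infty\) slightly more transparent, whereas you invoke it through the constant \(C_F\).
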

\begin{proof}
    Since the expression is bounded away from \(\infty\) it is sufficient to consider the supremum for \(k \geq 2\) and bound
    \begin{align*}
        \Alignindent
        \sup_{k\geq 2} \, \sup_{m \geq k/2} \, (1+m)^\nu \, F\paren[\bigg]{\frac{3m}{4} - \frac{k+1}{4}}
        \\&=
        \sup_{k\geq 2} \, \sup_{m \geq (k/6- 1/3)} \, \paren[\bigg]{1+m+\frac{k+1}{3}}^\nu \, F\paren[\bigg]{\frac{3m}{4}}
        \\&=
        \sup_{k\geq 0} \, \sup_{m \geq k/6} \, \paren[\bigg]{1+m+\frac{k+3}{3}}^\nu \, F\paren[\bigg]{\frac{3m}{4}}
        \\&\leq
        \sup_{k\geq 0} \, \sup_{m \geq k/6} \, (2+3m)^\nu \, F\paren[\bigg]{\frac{3m}{4}}
        \\&\leq
        4^\nu \, \sup_{m \geq 0} \, \paren[\bigg]{1+\frac{3m}{4}}^\nu \, F\paren[\bigg]{\frac{3m}{4}}
        \\&<
        \infty
        ,
    \end{align*}
    where we substituted \(m \to m + (k+1)/3\) and \(k \to k+2\) in the second and third step, respectively.
\end{proof}

\begin{lemma}\label{lem:convergence-of-liouvillian}
    For all \(A\in \mA_{\mathrm{loc}}\) it holds that
    \begin{equation*}
        \sup_{t\in I} \, \norm{\mL_{\Phi(t)}\, A - \mL_{\Phi^k(t)}\, A} \to 0
        \quadtext{as}
        k\to \infty
        .
    \end{equation*}
\end{lemma}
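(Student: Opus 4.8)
The plan is to realize the difference as a single, absolutely summable series over the lattice and to estimate the contributions of sites near and far from $x_0$ by two different mechanisms. First I would fix $A\in\mA_{\mathrm{loc}}$, so that $A\in\mA_M$ for some finite $M$, and since $M$ is finite there is an $R>0$ with $M\subseteq B_R(x_0)$. Because $\Phi$ and $\Phi^k$ are zero-chains, both $\mL_{\Phi(t)}A$ and $\mL_{\Phi^k(t)}A$ converge unconditionally, so by the definition of $\Phi^k$ the difference may be written termwise as
\[
\mL_{\Phi(t)}A - \mL_{\Phi^k(t)}A = \sum_{x\in B_{k/2}(x_0)}\commutator{(1-\E_{B_{k/2}(x)})\,\Phi_x(t),\,A} + \sum_{x\in\Gamma\setminus B_{k/2}(x_0)}\commutator{\Phi_x(t),\,A}.
\]
I would then bound each term in operator norm and show that both sums tend to $0$ as $k\to\infty$, uniformly in $t$; the uniformity is automatic because every bound below rests only on the time-independent decay function $G$ and on the $t$-uniform growth estimate of \cref{ass:growth}.

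For the finite sum over $x\in B_{k/2}(x_0)$ I would simply use the trivial commutator bound together with $\norm{(1-\E_{B_{k/2}(x)})\Phi_x(t)}\le G(k/2)\,\norm{\Phi_x(t)}_{G,x}$ and \cref{ass:growth}, to get
\[
\sum_{x\in B_{k/2}(x_0)}\norm{\commutator{(1-\E_{B_{k/2}(x)})\Phi_x(t),A}} \le 2\,\norm{A}\,C_\Phi\,G(k/2)\sum_{x\in B_{k/2}(x_0)}\paren[\big]{1+\dist{x,x_0}}.
\]
The remaining sum is at most $C_{\mathrm{vol}}\,(1+k/2)^{D+1}$ by the volume-growth bound, and since $\nu_G>D+2$ the product $G(k/2)\,(1+k/2)^{D+1}$ tends to $0$. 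Hence this contribution vanishes as $k\to\infty$.

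The sum over the exterior $x\in\Gamma\setminus B_{k/2}(x_0)$ is the interesting part, and here I would exploit parity. For such $x$ and $k$ large enough that $k>4R$ one has $\dist{x,M}\ge\dist{x,x_0}-R>\dist{x,x_0}/2$, so the closed ball $B_{\dist{x,x_0}/2}(x)$ is disjoint from $M$. Since $\E_{B_{\dist{x,x_0}/2}(x)}\Phi_x(t)$ is even and localized in that ball while $A\in\mA_M$, the commutation of even observables in disjoint regions gives $\commutator{\E_{B_{\dist{x,x_0}/2}(x)}\Phi_x(t),A}=0$, whence
\[
\norm{\commutator{\Phi_x(t),A}} = \norm{\commutator{(1-\E_{B_{\dist{x,x_0}/2}(x)})\Phi_x(t),A}} \le 2\,\norm{A}\,G\paren[\big]{\dist{x,x_0}/2}\,\norm{\Phi_x(t)}_{G,x}.
\]
Inserting \cref{ass:growth} and choosing $\eps>0$ with $D+2+\eps<\nu_G$, the factor $G(\dist{x,x_0}/2)\,(1+\dist{x,x_0})$ is bounded by a constant times $(1+\dist{x,x_0})^{-(D+1+\eps)}$, so the whole exterior sum is dominated by $C\,\norm{A}\sum_{x\in\Gamma\setminus B_{k/2}(x_0)}(1+\dist{x,x_0})^{-(D+1+\eps)}$. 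By \cref{lem:summability} the full series converges, so this tail tends to $0$ as $k\to\infty$.

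Combining the two estimates yields $\sup_{t\in I}\norm{\mL_{\Phi(t)}A-\mL_{\Phi^k(t)}A}\to0$. The main obstacle is the exterior sum: the naive bound $\norm{\commutator{\Phi_x(t),A}}\le 2\,\norm{\Phi_x(t)}\,\norm{A}$ carries no spatial decay and, combined with the linear growth of $\norm{\Phi_x(t)}_{G,x}$, would diverge. The essential step is therefore the parity argument, which trades the commutator for the genuinely decaying quantity $G(\dist{x,x_0}/2)$ and lets the volume growth be beaten by the summability of \cref{lem:summability}.
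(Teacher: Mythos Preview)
Your proof is correct and follows essentially the same route as the paper's: the same splitting into $x\in B_{k/2}(x_0)$ and its complement, the trivial commutator bound combined with $G$-decay and volume growth for the interior sum, and the parity argument (inserting a conditional expectation onto a ball around $x$ disjoint from $\supp A$) to extract $G$-decay in the exterior sum. The only cosmetic differences are the choice of radius ($\dist{x,x_0}/2$ versus the paper's $\dist{x,x_0}/4$) and that you conclude via the vanishing tail of the convergent series from \cref{lem:summability} rather than extracting an explicit $(1+k/2)^{-\eps}$ factor; both are equally valid here.
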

\begin{proof}
    Let \(A\in \mA_{\mathrm{loc}}\) and \(k\in \N_0\). We have
    \begin{align*}
        \Alignindent
        \sup_{t\in I} \, \norm{\mL_{\Phi(t)}\, A - \mL_{\Phi^k(t)}\, A}
        \\&\leq
        \sup_{t\in I} \, \sumstack[r]{x\in B_{k/2}(x_0)} \, \norm[\big]{\commutator[\big]{(1-\E_{B_k(x)}) \, \Phi_x(t), A}}
        + \sup_{t\in I} \, \sumstack[r]{x\in \Gamma \setminus B_{k/2}(x_0)} \, \norm[\big]{\commutator[\big]{\Phi_x(t), A}}
        .
    \end{align*}
    The first term is bounded by
    \begin{equation*}
        \sum_{x\in B_{k/2}(x_0)} \sup_{t\in I} \, 2 \, \norm{\Phi_x(t)}_{G,x} \, G(k)\, \norm{A}
        \leq
        2 \, C_{\mathrm{vol}} \, \paren[\big]{1+\tfrac{k}{2}}^{D}\, C_\Phi \, \paren[\big]{1+\tfrac{k}{2}} \, G(k)\, \norm{A}
        ,
    \end{equation*}
    and since \(D+1 < \nu_G\) this bound vanishes as \(k\to \infty\).
    For the second term we assume that \(k\) is large enough so that \(A\) is supported in \(B_{k/8}(x_0)\).
    This allows us to insert a conditional expectation and then bound the second term by
    \begin{align*}
        \Alignindent
        \sup_{t\in I} \, \sum_{x\in \Gamma \setminus B_{k/2}(x_0)} \norm[\big]{\commutator[\big]{(1-\E_{B_{\dist{x,x_0}/4}(x)})\,\Phi_x(t), A}}
        \\&\leq
        \sum_{x\in \Gamma \setminus B_{k/2}(x_0)} \sup_{t\in I} \, 2 \, \norm{\Phi_x(t)}_{G,x} \, G\paren[\big]{\dist{x,x_0}/4} \, \norm{A}
        \\&\leq
        2 \, \sumstack[lr]{x\in \Gamma \setminus B_{k/2}(x_0)} \, C_\Phi\, \paren[\big]{1+\dist{x,x_0}} \, G\paren[\big]{\dist{x,x_0}/4} \, \norm{A}
        \\&\leq
        \frac{2 \, C_\Phi}{(1+k/2)^\epsi} \, \sup_{m \geq k/2} \, (1+m)^{D+2+2\eps}\, G(m/4) \, \sum_{x\in \Gamma} \frac{1}{\paren[\big]{1+\dist{x,x_0}}^{D+1+\eps}} \, \norm{A}
        ,
    \end{align*}
    for an \(\eps >0\), such that \(D+2+2\eps < \nu_G\).
    The final sum converges as shown in Lemma~\ref{lem:summability}, the supremum is bounded and hence the expression converges to \(0\) as \(k\to \infty\).
\end{proof}

\begin{lemma}\label{lem:continuity-of-liouvillian}
    For all \(A\in \mA_{\mathrm{loc}}\) the map \(I \to \mA, \, t \mapsto \mL_{\Phi(t)}\, A\) is continuous.
\end{lemma}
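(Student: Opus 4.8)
The plan is to prove continuity at each fixed $t\in I$ by a dominated-convergence type argument: split the defining sum $\mL_{\Phi(t)}\,A = \sum_{x\in\Gamma}\commutator{\Phi_x(t),A}$ into a finite near part supported around $x_0$ and an infinite tail, control the tail uniformly in time, and exploit the per-site norm-continuity of the zero-chain on the finite part. Throughout, I fix $A\in\mA_{\mathrm{loc}}$; since $A$ is local it is supported in $B_{k/8}(x_0)$ for all sufficiently large $k$, and since $(\Gamma,d)$ is $D$-regular each ball $B_{k/2}(x_0)$ is finite.

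First I would write, for any $u\in I$,
\[
    \norm{\mL_{\Phi(u)}\,A - \mL_{\Phi(t)}\,A}
    \leq
    \sum_{x\in B_{k/2}(x_0)} \norm{\commutator{\Phi_x(u)-\Phi_x(t),\,A}}
    + \sum_{x\in\Gamma\setminus B_{k/2}(x_0)} \norm{\commutator{\Phi_x(u)-\Phi_x(t),\,A}},
\]
and bound the tail by the triangle inequality together with the uniform-in-time estimate
\[
    \sum_{x\in\Gamma\setminus B_{k/2}(x_0)} \norm{\commutator{\Phi_x(u)-\Phi_x(t),\,A}}
    \leq
    2\sup_{v\in I}\sum_{x\in\Gamma\setminus B_{k/2}(x_0)} \norm{\commutator{\Phi_x(v),\,A}}.
\]
The key point is that this last supremum tends to $0$ as $k\to\infty$. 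This is precisely the second term treated in the proof of \cref{lem:convergence-of-liouvillian}: using that $A$ is supported in $B_{k/8}(x_0)$ one inserts the conditional expectation $(1-\E_{B_{\dist{x,x_0}/4}(x)})$ into each commutator, applies the linear growth bound of \cref{ass:growth} and the decay of $G$ (recall $\nu_G>D+2$), and sums via \cref{lem:summability}. Given $\eps>0$, I fix $k$ large enough that this tail contribution is below $\eps/2$, uniformly in $u$.

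With $k$ now fixed, the near part is a finite sum, and bounding each commutator trivially gives
\[
    \sum_{x\in B_{k/2}(x_0)} \norm{\commutator{\Phi_x(u)-\Phi_x(t),\,A}}
    \leq
    2\,\norm{A}\sum_{x\in B_{k/2}(x_0)} \norm{\Phi_x(u)-\Phi_x(t)}.
\]
Since this is a finite sum and each map $t\mapsto\Phi_x(t)$ is norm-continuous by the definition of a time-dependent zero-chain, there is a $\delta>0$ such that this expression stays below $\eps/2$ whenever $\abs{u-t}<\delta$. Combining the two bounds yields $\norm{\mL_{\Phi(u)}\,A-\mL_{\Phi(t)}\,A}<\eps$ for $\abs{u-t}<\delta$, which is the claimed continuity.

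The only delicate step is the uniform-in-time control of the tail, and this is exactly where the linear growth assumption and the locality of $A$ enter: linear growth of $\norm{\Phi_x(u)}_{G,x}$ is offset by the faster polynomial decay of $G$ through the inserted conditional expectation, making the tail both summable and small uniformly in $u$. Since that computation has already been carried out for \cref{lem:convergence-of-liouvillian}, I expect it to require no new work here, and the remainder of the argument is the standard finite-sum continuity reduction.
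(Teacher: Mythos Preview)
Your proof is correct and follows essentially the same approach as the paper: both reduce continuity to the uniform-in-time tail estimate already established in \cref{lem:convergence-of-liouvillian} together with continuity of a finite sum. The paper packages this as ``uniform limit of continuous functions is continuous'' via the approximants \(\mL_{\Phi^k(t)}A\), whereas you write out the \(\eps/2\) argument directly with the simpler truncation \(\sum_{x\in B_{k/2}(x_0)}[\Phi_x(t),A]\), which only needs the second of the two tail bounds from the proof of \cref{lem:convergence-of-liouvillian}.
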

\begin{proof}
    For all \(k \in \N_0\) and all \(A\in \mA_{\mathrm{loc}}\), we know that the map \(I \to \mA, \, t \mapsto \mL_{\Phi^k(t)}\, A\) is continuous, since each of the finitely many quasi-local terms is continuous.
    Together with the uniform convergence of Lemma~\ref{lem:convergence-of-liouvillian}, this implies the claim.
\end{proof}

\statement{Acknowledgments}
This work was supported by the Deutsche Forschungsgemeinschaft (DFG, German Research Foundation) through TRR~352 (470903074) and FOR~5413 (465199066).

\printbibliography[heading=bibintoc]

\end{document}